\documentclass[11pt]{article}

\usepackage[margin=1.15in]{geometry}
\usepackage{amsmath,amssymb,amsthm,mathtools}
\usepackage{enumitem}
\usepackage{hyperref}
 \usepackage{colortbl}
 \usepackage{tikz}

\newtheorem{theorem}{Theorem}[section]
\newtheorem{proposition}[theorem]{Proposition}
\newtheorem{corollary}[theorem]{Corollary}
\newtheorem{lemma}[theorem]{Lemma}
\newtheorem{definition}[theorem]{Definition}
\newtheorem{remark}[theorem]{Remark}
\newtheorem{example}[theorem]{Example}

\newcommand{\up}{\uparrow}
\newcommand{\cmark}{\mathrm c}
\newcommand{\N}{\mathbb N}
\newcommand{\Nzero}{\mathbb N_0}
\newcommand{\Nb}{\widehat{\mathbb N}_0}
\newcommand{\Pow}{\mathcal P}
\newcommand{\F}{\mathcal F}
\newcommand{\G}{\mathcal G}

\newcommand{\W}{\mathcal W}
\newcommand{\cS}{\mathcal S}
\newcommand{\SCF}{\mbox{\boldmath $\Phi$}}
\newcommand{\siga}{\sigma}

\newcommand{\co}{\mathrm{co}}

\DeclareRobustCommand{\rchi}{{\mathpalette\irchi\relax}}
\newcommand{\irchi}[2]{\raisebox{\depth}{$#1\chi$}} 

\title{
{\color{black} Coalition strategy-proof} anonymous binary social choice in a countably infinite society
\\[0.4em]
}
\author{Achille Basile\thanks{Corresponding author, Dipartimento di Scienze Economiche e Statistiche,  Universit\`a Federico II,
80126 Napoli, Italy, E-mail: basile@unina.it;  
},\,
{K. P. S. Bhaskara Rao\thanks{Department of Computer Information Systems, Indiana 
University Northwest, Gary, IN 46408, E-mail: bkoppart@iu.edu}},\,
{Surekha Rao\thanks{School of Business and Economics, Indiana University Northwest,
Gary, IN 46408, E-mail: skrao@iu.edu}} ,\,
}

\begin{document}

\maketitle

\begin{abstract}

{\color{black}
We study anonymous binary social choice functions on the universal weak-preference
domain when the society of voters is countably infinite. Voters may strictly support
one of the two alternatives or be indifferent, and non-manipulability is required in
the form of coalitional strategy-proofness. Full anonymity reduces the relevant
information about a coalition not to its cardinality alone, but to its cardinality
signature: finite of size \(k\), infinite with infinite complement, or cofinite with
complement of size \(k\). Recording the signatures of the two strict-support
coalitions yields an infinite triangular grid, ordered by increasing support
for one alternative and decreasing support for the other. We prove that anonymous
coalition strategy-proof binary social choice functions are exactly those obtained from
super order closed subsets, i.e. upper sets, of this grid. We then give a geometric classification of
these subsets. Unlike in the finite-voter case, they need not be generated by their minimal
elements: countable infinity creates missing-boundary regions corresponding to
finite opposition or cofinite strict support. We prove an exact representation of these subsets by genuine or missing-boundary corners, and derive a
nonredundant parametrization by an antichain and, when necessary, one exceptional
missing-boundary region. This yields four generalized zigzag forms and completes the
infinite-society counterpart of the known geometric representation for
finite electorates.}

\end{abstract}

JEL Code: D71

 AMS Subject Classification: 91B14

{\it{Keywords: 
social choice functions, coalitional strategy-proofness,  anonymity,   
  weak preferences. 
}}

\bigskip

 \newpage

\section{Introduction}

{\color{black}
The present paper is concerned with anonymous binary social choice functions. The voters have to choose collectively between two alternatives, say $a$ and $b$, and they are allowed to express indifference. Indifference records the absence of a strict preference, a possibility relevant to many binary collective decisions, including referenda, ratification votes, and committee decisions.\footnote{Indifference is a preference relation, whereas abstention is an action that may also be chosen by a voter with a strict preference. If abstention is encoded as a report of indifference, it supplies the same input to the rule, while the assessment of manipulation remains based on the voter's true preference.}
}
In the case of finitely many voters,
Basile, Rao and Bhaskara Rao \cite{mss} show that anonymous and strategy-proof binary
social choice functions admit a simple geometric representation. 

If the number of
voters is \(n\), anonymity reduces each profile to the pair
$\,
(k,m),
\,$
where \(k\) is the number of voters who strictly prefer \(a\) to \(b\), and \(m\) is the
number of voters who strictly prefer \(b\) to \(a\). Thus the relevant domain becomes
the finite triangular grid
\[
G_n=\{(k,m)\in \mathbb N_0^2:k+m\le n\}.
\]
Adopting the partial ordering according to which $(k,m)\leq_{G_n} (k',m')$ if and only if $k\le k'$ and $m\ge m'$,
strategy-proofness is then represented by a monotonicity condition on this grid, and
the admissible social choice functions can be visualized through the shape of the
corresponding regions of \(G_n\). Indeed  such regions have to be separated by a vertical-horizontal zigzag line whose
sequence of corners is characterized by the property of forming an antichain of $(G_n, \leq_{G_n})$.

Our aim is to extend this analysis to a countably infinite \lq\lq society\rq\rq\, that we represent by the set \(\mathbb N=\{1, 2, \dots \}\) of natural numbers. 
Thus each voter  $i\in \mathbb N$ may report one of the three possibilities: strict preference for \(a\), strict preference for \(b\), or indifference.
We restrict attention to anonymous social choice functions, where anonymity means
invariance under all permutations of \(\mathbb N\). 
{\color{black}
In an infinite society, invariance under all permutations is a substantive version of anonymity rather than the only possible extension of the finite notion. Weaker notions, such as finite and bounded anonymity, have also been considered. Notably, for the purpose of obtaining density-based analogues of majority rule,  Fey \cite{Fey2002} adopts  the intermediate notion of bounded anonymity.
 We  adopt full permutation invariance to express symmetry among voters without assigning a substantive role to their enumeration. At the level of a single coalition, the information preserved by this symmetry is its cardinality together with the cardinality of its complement. Thus finite coalition sizes and finite complement sizes remain relevant, while all infinite non-cofinite coalitions are equivalent under relabelling. These distinctions are recorded by the cardinality signatures introduced below.}
 
The use of \(\mathbb N\) as the set of decision-making units  is an idealized large-society model, well represented in the literature on  social choice and voting 
{\color{black}
as well as the neighbouring Arrovian tradition. In addition to the aforementioned Fey \cite{Fey2002}, 
Cato, R\'emila and Solal \cite{CatoRemilaSolal2021}, for instance, motivate countably infinite voting
models through intergenerational choice (each successive generation adds an additional voter to a sequence, thereby rendering the relevant electorate countably infinite),  and other interpretations involving
infinite streams of decision-making units; Pivato and Fleurbaey \cite{PivatoFleurbaey2024}
survey recent developments in infinite-population ethics and intergenerational
social welfare. Related measure-theoretic approaches also show that the structure assigned to an infinite population affects the resulting social-choice possibilities; see, for instance, Cato \cite{Cato2021JME} on preference aggregation with a finitely additive measure space of agents and the role of atoms in Arrow-type compatibility results.
Cato \cite{Cato2022} discusses infinite-population
preference aggregation in connection with both classical impossibility theory and
coalitional strategy-proofness.
 His framework requires at least three alternatives and his central concern is the compatibility of a stability axiom\footnote{Adjoining the social outcome itself to the profile as an additional voter leaves the outcome unchanged.} with coalitional/individual strategy-proofness and non-dictatorship, without considering anonymity.
Differently, the present paper fixes exactly two alternatives and anonymity throughout, imposes coalitional strategy-proofness without any stability requirement, and asks for an exhaustive geometric classification of the admissible rules.
}

 More generally, infinite-population models have
been used to clarify how standard social choice principles behave once the
finiteness of the population is removed.

{\color{black}
This qualitative shift is likely best known on the Arrovian side of the theory, where stepping from a finite to an infinite  society,  the celebrated Arrow's impossibility theorem becomes a possibility one: an infinite electorate admits non-dictatorial rules governed by a free ultrafilter -- "invisible dictators" in the terminology of Kirman and Sondermann \cite{KS} (see also Fishburn \cite{Fishburn1970}; Armstrong \cite{TA}; and, for the countable case specifically, Mihara \cite{M0}). Since Gibbard--Satterthwaite theorem on the manipulability of voting is  known to be intimately connected  to Arrow's theorem (see  Reny \cite{Reny}, for example), the same qualitative shift is to be expected -- and indeed occurs -- on the strategy-proofness side. In this setting, Mihara \cite{M1} shows that coalition strategy-proof social choice functions depend only on voters' most-preferred alternatives, and Rao, Basile and Bhaskara Rao \cite{etb} sharpen this into an explicit ultrafilter representation of these functions.  The present paper documents an analogous phenomenon for coalition strategy-proof, binary, anonymous, indifference-admitting choice: the geometric classification of Section \ref{sei} includes shapes with no finite-society counterpart (Corollary \ref{thm:zigzag}, notably super order closed\footnote{We recall that in the framework of partially ordered sets,  super order closed subsets are also called upper sets.} sets that cannot be generated by an antichain, Example \ref{62}).
}

There are two reasons why the passage from finitely many voters to countably many
voters is not merely formal. 
The first one is strategic. With infinitely many voters,
the natural non-manipulability notion is  coalitional  strategy-proofness, also called
weak group strategy-proofness. This is the terminology adopted in Barber\`a, Berga
and Moreno \cite{barbera2012} and in Basile, Bhaskara Rao and Rao \cite{scw}. 
{\color{black}
Group incentive-compatibility notions for social choice functions remain an active topic; for example, Arribillaga, Mass\`o and Neme \cite{ArribillagaMassoNeme2026} study group obvious strategy-proofness, an extension of obvious strategy-proofness to coalitions.  

We recall that
on the universal domain, individual and coalitional strategy-proofness coincide when the electorate is finite (Dasgupta, Hammond and Maskin, \cite{Dasgupta}); Barbera, Berga and Moreno \cite{BBM10} and Le Breton and Zaporozhets \cite{lebreton} show, more generally, that whether the equivalence holds true,  is a question of domain restriction, not of population size as such. The gap we rely on here is of the second, population-driven kind: with an infinite electorate, coalitional strategy-proofness is strictly stronger than individual strategy-proofness, even on the universal domain, and \cite[Example 2.2]{et} exhibits this gap directly within the present binary framework.
 The distinction acquires normative weight in our setting: every coalition strategy-proof social choice function is weakly Pareto optimal relative to its range: no alternative in that range is strictly preferred by every voter to the selected outcome.
 With an infinite set of voters individual strategy-proofness alone does not suffice for this property \cite[Example 2.2]{et}. 
 If weak Pareto optimality is regarded as a desirable requirement of a collective choice rule, this distinction provides an additional reason for adopting coalitional -- not merely individual -- strategy-proofness once the electorate is infinite.
 {\color{black}Finally, coalitional strategy-proofness also rules out profitable deviations requiring changes in infinitely many reports. Such deviations can alter whether a strict-support coalition is finite, infinite with infinite complement, or cofinite, whereas no finite number of report changes can do so. This is a source of strategic restrictions that individual strategy-proofness alone does not impose. Their role in the genuinely infinitary features of the classification of Section \ref{sei} is made more explicit in the concluding remarks (Section \ref{conclusioni}).}
}
The second reason is geometric.
In the finite case, anonymity leaves only finite cardinalities to be counted. In a countably infinite society, ordinary finite counts no longer exhaust the anonymous information contained in a coalition of voters. A subset of \(\mathbb N\), up to permutation of the voters, may be finite of cardinality \(k\), infinite with infinite complement, or cofinite with complement of cardinality \(k\).

This observation leads to the notion of cardinality signature. We introduce a linearly
ordered set
\[
T=\mathbb N_0\cup\{\infty\}\cup\{k^c:k\in\mathbb N_0\},
\]
where \(k\in \N_0\) records finite cardinality \(k\), \(\infty\) records the infinite non-cofinite
case, and \(k^c\) records the cofinite case with complement of cardinality \(k\).
The signature of a set \(E\subseteq\mathbb N\), denoted by \(\sigma(E)\), records exactly
the permutation-invariant cardinality information of \(E\).

In the weak-preference model the two relevant coalitions associated with a profile
\(P\) of preferences are
\[
D(a,P)=\{i\in\mathbb N: a \succ_{P_i} b\},
\qquad
D(b,P)=\{i\in\mathbb N:b\succ_{P_i} a\}.
\]
The voters outside these two sets are indifferent. 
{\color{black}
The pair
\[
\rchi_w(P)=\bigl(\sigma(D(a,P)),\sigma(D(b,P))\bigr).
\]
records the cardinality-signature information associated with the two strict-support coalitions. As will follow from the representation theorem, this is exactly the ordered cardinality information relevant for anonymous coalition strategy-proof rules.

Notice that  $\rchi_w$ need not distinguish all profiles that are inequivalent under permutations, since it does not separately record the cardinality of the indifferent class.
}

The range of the map $P\mapsto \rchi_w(P)$ is an infinite triangular grid
\[
G=\rchi_w(\mathcal P)\subseteq T\times T.
\]
This grid is the countably infinite analogue of the finite grid \(G_n\) used in \cite{mss}.

The character approach of Basile, Bhaskara Rao and Rao \cite{scw} is the main methodological
tool used here. That paper develops canonical functional forms for non-manipulable
two-valued social choice functions by representing them through super order closed
subsets of suitable partially ordered sets. 
{\color{black}The present analysis also belongs to a line of work on two-valued coalition strategy-proof social choice functions with indifference. In particular, Basile, Rao and Bhaskara Rao \cite{BasileRaoBhaskaraRao2021}  provide a representation formula for coalition strategy-proof social choice functions whose range has cardinality two, on domains allowing indifference and for societies and alternatives of arbitrary cardinality. Differently, the present paper imposes anonymity, fixes a countably infinite society, and shows that these additional restrictions turn the representation into the geometry of super order closed subsets of the cardinality-signature grid $G$. }

Applying the character methodology to the present setting,
the result is that anonymous coalition
strategy-proof binary social choice functions are represented exactly by the super
order closed subsets of the infinite grid \(G\). We recall that super
order closed sets  are also called upper sets.

{\color{black}
The specifically infinite part of the analysis has three interconnected components. First, the anonymous weak-preference domain is represented by the cardinality-signature grid $G$, whose structure reflects the distinction between finite, infinite non-cofinite, and cofinite coalitions. Second, a permutation lemma shows that the order on $G$ is compatible with comparisons of actual coalitions after a suitable relabelling of the voters, which is the key step in the representation theorem. Third, the SOC subsets of $G$ exhibit genuinely infinitary geometric features: they need not be generated by their minimal elements, and their description may require missing-boundary components in addition to genuine corners. Proposition \ref{use of parameters} completes this description with a nonredundant parametrization by an
antichain and an optional exceptional region, subject to explicit admissibility conditions.
Combined with Theorem \ref{53}, it generates every anonymous coalition strategy-proof rule
exactly once.
}


Next, the organization of the sections is as follows. Section \ref{sezione2} introduces the social choice model and the
basic order-theoretic terminology. Section \ref{sezione3} is a set-theoretic preliminary section:
it classifies the anonymous super order closed families of subsets of \(\mathbb N\).
Section \ref{sezione4}
treats the strict preference case. This case is one-dimensional, because
without indifference the set \(D(a,P)\) determines the relevant binary information.
It also prepares the cardinality-signature notation used later. Section \ref{sezione5} turns to  weak preferences. We define the weak-domain character \(\rchi_w\), identify its
range with the infinite grid \(G\), and prove the representation theorem: anonymous
coalition strategy-proof binary social choice functions are in one-to-one
correspondence with the super order closed subsets of \(G\). Having established the representation theorem, which constitutes the first step towards the infinite-voters analogue of \cite{mss}, in Section \ref{sei} we undertake the second step, which is of a geometric nature. We describe the shape of the super order closed
subsets of \(G\).  This is richer than in the finite  case, showing phenomena that  emerge because the set of voters is infinite. The Section is enriched with figures intended to clarify the results obtained. The more technical proofs, the omission of which from the main text makes it flow better and renders it more readable, are in Section \ref{dimostrazioni} preceded by Section \ref{conclusioni} which contains the conclusions.

\section{The social choice model and basic notation}\label{sezione2}

A countably infinite society is meant to be a countable, not finite, set whose elements represent decision making units, or voters. We enumerate the voters as
\[
\N=\{1,2,\ldots\}.
\]
Throughout the paper the fact that $i\in\N$ may have two meanings: voter $i$, or the natural number $i$. No ambiguity will arise.

Let $a,b$ be the  alternatives under discussion. We denote by $\W=\W(\{a,b\})$ the set of all complete and transitive binary relations on $\{a,b\}$. Simply this means that $W\in\W$, is one of the following  relations
\[
 a \succ_W b, \qquad b\succ_W a, \qquad a\sim_W b
\]
with the usual meaning. Preference profiles form the universal weak domain, i.e. the set 
$
\Pow=\W^\N.
$
A profile is denoted by $P=(P_i)_{i\in\N}$.

For a profile $P\in\Pow$ we set
\[
D(a,P)=\{i\in\N: a\succ_{P_i} b\},
\]
\[
D(b,P)=\{i\in\N: b\succ_{P_i} a\},
\]
and
\[
I(P)=\{i\in\N: a\sim_{P_i} b\}.
\]
Thus $D(a,P)$, $D(b,P)$ and $I(P)$ form a partition of $\N$.

A binary social choice function is a map
\[
\phi:\Pow\longrightarrow \{a,b\}.
\]
For the sake of brevity we write scf (scfs, for the plural) to mean \lq\lq binary social choice function\rq\rq. We also use \lq\lq rule\rq\rq\, sometime.

\begin{definition}[Coalitional strategy-proofness]
Let $\phi:\Pow\to\{a,b\}$ be a scf. A  coalition $S\subseteq\N$ strongly manipulates the profile $P$ under $\phi$ if there is $Q\in\Pow$ such that
\begin{enumerate}[label=(\roman*)]
\item $Q_i=P_i$ for every $i\notin S$;
\item every voter $i\in S$ strictly prefers $\phi(Q)$ to $\phi(P)$ according to $P_i$.
\end{enumerate}
The scf $\phi$ is coalition strategy-proof, or CSP, if no coalition strongly manipulates any profile.
\end{definition}

\noindent
Observe that this notion is the {\it weak group strategy-proofness} of Barber\`a et al. \cite{barbera2012}. 

\begin{definition}[Anonymity]
Let $\pi$ be a permutation of $\N$. For a profile $P$, define the profile $P\circ \pi$ by
\[
(P\circ \pi)_i=P_{\pi(i)}.
\]
A scf $\phi:\Pow\to\{a,b\}$ is anonymous if
\[
\phi(P\circ \pi)=\phi(P)
\]
for every $P\in\Pow$ and every permutation $\pi$ of $\N$.
\end{definition}

Observe that
\[
D(a,P\circ \pi)=\pi^{-1}(D(a,P)),
\qquad
D(b,P\circ \pi)=\pi^{-1}(D(b,P)).
\]

\noindent
We recall the basic order-theoretic terminology that will be used throughout the paper. 
If $(L,\geq)$ is a partially ordered set, a subset $C\subseteq L$ is super order closed, briefly SOC, if
\[
x\in C,\quad y\geq x \quad \Longrightarrow\quad y\in C.
\]
Thus, a SOC subset is an upper set of the poset.  A subset $A$ of $L$ is an antichain when its distinct elements are pairwise incomparable: $x, x'\in A, x\neq x'\Rightarrow x'\ge x$ is false. The notation
$\up M$ is used for the SOC subset of $L$ generated by the subset $M$ of $L$: $\up M=\{x\in L: \exists m\in M$ such that $x\ge m\}.$ Of course the empty set is an antichain and it is SOC. The SOC subset of $L$ generated by the empty set is the empty set itself.  Finally we recall that  the antichains  of a finite poset $L$ are in a one-to-one correspondence with its SOC subsets, a circumstance whose failure in the infinite grid $G$ we shall introduce, plays a central role in the paper, (Example \ref{62}, {\it infra}).

\section{Anonymous committees on a countably infinite set}\label{sezione3}

The purpose of this  purely set-theoretic section is to classify the anonymous SOC families of subsets of $\N$. This classification will be used in the strict-preference representation, and it will also motivate the cardinality signatures introduced later.

With symbols such as $\F,\G$, we denote families of subsets of voters. Thus $\F,\G\subseteq 2^\N$. We say that $\F$ is SOC if it is a SOC subset of the poset $(2^\N,\subseteq)$. 

\begin{definition}
A family $\F\subseteq 2^\N$ is anonymous if
\[
E\in\F,\quad \pi\text{ permutation of }\N
\quad\Longrightarrow\quad
\pi(E)\in\F.
\]
The family $\F$ is said to be ANSOC if it is both anonymous and SOC.
\end{definition}

We also use the term {\it committee}  for a SOC family, with one caveat: stricto sensu a committee (also {\it simple game} in the Game Theory literature - \cite{peleg}) is a nonempty family of coalitions (i.e. nonempty sets) of voters which is closed under supersets. So we are allowing  the use of the term committee also for the empty family and the whole power set. 
 Adopting this, we may  replace  {\it  \lq\lq ANSOC $\F$\rq\rq\,} with {\it \lq\lq anonymous committee $\F$\rq\rq}.
\begin{example}
{\rm The following families are ANSOC:
\begin{enumerate}[label=(\roman*)]
\item $\G_0=2^\N$;
\item for every $k\in\N$, 
\[
\G_k=\{E\subseteq\N: |E|\geq k\};
\]
\item 
\[
\G_\infty=\{E\subseteq\N:E\text{ is infinite}\};
\]
\item
\[
\G_{\co-\infty}=\{E\subseteq\N:E\text{ is cofinite}\};
\]
\item for every $k\in\N$,
\[
\G_{\co-k}=\{E\subseteq\N: |E^c|<k\};
\]
\item $\G_{\co-0}=\varnothing$.
\end{enumerate}
They form the chain
\[
\G_0\supset \cdots \supset \G_k\supset \G_{k+1}\supset\cdots
\supset \G_\infty\supset \G_{\co-\infty}
\supset\cdots\supset \G_{\co-(k+1)}\supset \G_{\co-k}\supset\cdots\supset \G_{\co-0}.
\]}
\end{example}

\noindent
In the power set of $\N$ there are no other ANSOC families, as the following theorem establishes.
\begin{theorem}[Classification of ANSOC families]\label{classificazione}
If $\F$ is an ANSOC family of subsets of $\N$, then $\F$ is one of the families listed in the previous example.
\end{theorem}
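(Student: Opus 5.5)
The classification rests on two facts. First, an anonymous family is a union of orbits. Second, the orbits of subsets of $\N$ under permutations are exactly the signature classes: two subsets $E,E'$ lie in the same orbit if and only if $|E|=|E'|$ and $|E^c|=|E'^c|$. Indeed, a permutation mapping $E$ onto $E'$ exists exactly when there are bijections $E\to E'$ and $E^c\to E'^c$; one glues them. So the orbits are:
\begin{itemize}
\item for each $k\in\Nzero$, the class of sets of size $k$;
\item the class of infinite sets with infinite complement;
\item for each $k\in\Nzero$, the class of cofinite sets whose complement has size $k$.
\end{itemize}
Hence $\F$ is determined by which of these classes it contains. The plan is to show that the classes are linearly ordered by the quasi-order ``some member of one class is contained in some member of the other'', and that SOC forces $\F$ to be an up-set of this chain.

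\textbf{Comparability of classes.} I would check directly that every member of a class lying lower in the list
\[
\text{finite } k,\quad \infty,\quad \text{cofinite with complement of size }k \text{ (}k\text{ decreasing)}
\]
is contained in some member of each higher class. A set of size $k$ extends to a set of size $k+1$, to an infinite--coinfinite set, and to cofinite sets. An infinite--coinfinite set extends to a cofinite set with complement of any prescribed finite size, by adding all but finitely many points of its complement. A cofinite set with complement of size $k+1$ is contained in one with complement of size $k$. Since $\F$ is anonymous, containing one member of a class means containing the whole class. Combined with SOC, this gives: if $\F$ contains a class $X$, then $\F$ contains every class above $X$ in the list.

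\textbf{Identifying $\F$.} I would split into cases according to which classes $\F$ meets.
\begin{itemize}
\item If $\F=\varnothing$, then $\F=\G_{\co-0}$.
\item If $\F$ contains some finite set, let $k$ be the least size of a finite member. Then $\F$ contains all sets of size $\geq k$, all infinite--coinfinite sets and all cofinite sets, and no set of size $<k$. So $\F=\G_k$, which reads as $\G_0=2^\N$ when $k=0$.
\item Otherwise, if $\F$ contains an infinite--coinfinite set, then $\F=\G_\infty$.
\item Otherwise $\F$ consists of cofinite sets only. If the complement sizes of its members are unbounded, then $\F$ contains all cofinite sets, so $\F=\G_{\co-\infty}$.
\item If those complement sizes are bounded, let $m$ be their maximum. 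Then $\F$ is exactly the family of sets whose complement has size $\leq m$, that is, $\G_{\co-(m+1)}$.
\end{itemize}

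\textbf{The main care point.} The orbit characterization needs the gluing argument, handled separately for the finite, infinite--coinfinite and cofinite cases, so that the class of cofinite sets with complement of size $k$ is a single orbit. The other delicate step is the unbounded case in the last branch, where one must note that unboundedness together with upward closure gives every cofinite set. Everything else is routine.
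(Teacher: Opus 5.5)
Your proposal is correct and follows essentially the same route as the paper. The case split is the same (empty family; a finite member of least size $k$; an infinite--coinfinite member; only cofinite members with bounded or unbounded complement sizes), and each case uses the same anonymity-plus-SOC step. The only difference is organizational: you isolate the orbit/signature-class description and the comparability of classes up front, whereas the paper builds the needed permutation $\pi$ with $\pi(F)\subseteq E$ directly in each case.
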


\begin{proof} See the Appendix. \ref{dim classificazione}
\end{proof}

\begin{remark}{\rm The representation of  CSP scfs, without anonymity requirement,  obtained  in Basile,  Rao and Bhaskara Rao \cite{et} with reference to arbitrary, not necessarily countable (finite or not), set of voters relies on the
 duality  operation on  families $\F$ of subsets of a set. This operation is defined by setting
\[
E\in \F^\circ \quad\Longleftrightarrow\quad E^c\notin \F.
\]
We observe that $\G_k^\circ=\G_{\co-k}$ and $\G_\infty^\circ=\G_{\co-\infty}$. We shall not use this operation in the sequel, but the observation  explains the notation and the symmetry of the chain.}
\end{remark}

\section{The strict-preference case: a one-dimensional representation}\label{sezione4}

{\color{black}
Before considering the weak-preference domain, we use the strict-preference case as a one-dimensional benchmark. It provides a self-contained preliminary classification and introduces the cardinality-signature structure that will be used in the main two-dimensional construction.
 }The reason  
is simple. If voters are not allowed to express indifference, then each voter either supports $a$ or supports $b$. Hence the set $D(a,P)$ determines the entire profile as far as the binary collective choice is concerned. Under anonymity, the relevant information is the permutation-invariant cardinality information of $D(a,P)$. This results in the possibility of a one-dimensional geometrical interpretation of the class of all scfs  which
are anonymous and CSP.

In this section the domain  is a subset $\Pow^s$ of the previous $\Pow$, precisely
$
\Pow^s=\cS^\N
$, where $\cS$ only contains the two strict preference relation 
 $\quad a \succ b,$ \, and $\, b\succ a.$
We are interested in the class, denoted by $\SCF_{AN}^s$,  of all social choice functions  $\phi:\Pow^s\to\{a,b\}$ which are anonymous and CSP.

We recall  the following  representation result whose validity, we point out, is not limited to finite sets of voters, the case where it was first achieved in \cite{LS}. The formulation we present is  from Basile,  Rao and Bhaskara Rao \cite{et}.

\begin{theorem}[Committee representation under strict preferences]\label{41}
Let $V$ be an arbitrary set of voters. On the strict universal domain $S^V$, the CSP social choice functions with range contained in $\{a,b\}$ are all and only the functions
\[
\phi_\F(P)=
\begin{cases}
 a, & \text{if }D(a,P)\in\F,\\
 b, & \text{if }D(a,P)\notin\F,
\end{cases}
\]
where $\F\subseteq 2^V$ is a SOC subset of the poset $(2^V, \subseteq)$.
\end{theorem}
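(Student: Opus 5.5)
We prove the two directions of Theorem \ref{41} separately. Throughout, profiles are elements of $\cS^V$, and a profile $P$ is completely determined by the set $D(a,P)$, since $D(b,P)=V\setminus D(a,P)$.

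\emph{Sufficiency.} Let $\F$ be SOC and suppose, towards a contradiction, that a coalition $S$ strongly manipulates $P$ under $\phi_\F$ via $Q$. We may take $S\neq\varnothing$, since the empty coalition imposes no requirement and cannot move $P$. Assume first $\phi_\F(P)=b$ and $\phi_\F(Q)=a$. Every $i\in S$ strictly prefers $a$ to $b$ under $P_i$, so $S\subseteq D(a,P)$. Voters outside $S$ keep their reports, so $D(a,Q)\setminus S=D(a,P)\setminus S$. Together these give $D(a,Q)\subseteq D(a,P)$. Since $D(a,Q)\in\F$ and $\F$ is SOC, we get $D(a,P)\in\F$, which means $\phi_\F(P)=a$, a contradiction. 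The case $\phi_\F(P)=a$, $\phi_\F(Q)=b$ is symmetric. There $S\subseteq D(b,P)$, hence $D(a,P)\subseteq D(a,Q)$. Since $D(a,P)\in\F$, upward closure gives $D(a,Q)\in\F$, again a contradiction.

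\emph{Necessity.} Let $\phi$ be CSP with range contained in $\{a,b\}$. Set $\F=\{D(a,P):P\in\cS^V,\ \phi(P)=a\}$. Because $P\mapsto D(a,P)$ is a bijection from $\cS^V$ onto $2^V$, the identity $\phi=\phi_\F$ holds automatically, and the only thing left is to show that $\F$ is SOC. Take $E\in\F$ and $E\subseteq E'$. Let $P$ be the profile with $D(a,P)=E$, so $\phi(P)=a$, and $Q$ the profile with $D(a,Q)=E'$. Suppose $\phi(Q)=b$. Let $S=E'\setminus E$. Under $Q$ these voters are in $D(a,Q)$, so each strictly prefers $a$ to $b$. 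Moreover $P$ agrees with $Q$ outside $S$, and $P$ can be reached from $Q$ by letting the voters of $S$ switch their reports to $b\succ a$. This switch changes the outcome from $b$ to $a$, which every member of $S$ strictly prefers. So $S$ strongly manipulates $Q$, contradicting CSP. Hence $E'\in\F$. (If $S$ were empty we would have $P=Q$, so the supposition $\phi(Q)=b$ is impossible anyway.)

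The argument is elementary; the only point that needs care is matching the direction of the inclusion with the identity of the manipulating coalition in each case. Note also that the trivial families $\F=\varnothing$ and $\F=2^V$ are SOC and yield the constant rules. No finiteness of $V$ is used anywhere, which is why the theorem holds for an arbitrary set of voters.
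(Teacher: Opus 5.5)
Your proof is correct and complete. The paper gives no proof of Theorem~\ref{41}. It quotes the result from \cite{LS} for finite $V$ and from \cite{et} for arbitrary $V$, so there is no in-paper argument to compare with line by line.

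Your direct verification is the natural one. In effect it is the strict-domain case of the almost-monotonicity characterization of two-valued CSP rules (Theorem 3.3 of \cite{scw}), which the paper invokes in the proof of Theorem~\ref{53}. On $\cS^V$ one has $D(b,P)=V\setminus D(a,P)$. Hence the pair of inclusions $D(a,Q)\subseteq D(a,P)$, $D(b,Q)\supseteq D(b,P)$ collapses to the single inclusion $D(a,Q)\subseteq D(a,P)$, and almost monotonicity of $\phi_\F$ is literally upward closure of $\F$.

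What your route buys is self-containedness and clarity about where coalitional strength enters. In the necessity step the manipulating coalition $E'\setminus E$ may be infinite. When $V$ is infinite, this cannot be replaced by finitely many single-voter switches. Indeed, individual strategy-proofness only yields closure of $\F$ under finite enlargements. For example, $\F=\{E\subseteq\N: E \text{ finite and nonempty}\}$ gives a rule that is individually but not coalitionally strategy-proof. What the cited general machinery buys is portability to the weak domain, where $D(a,P)$ no longer determines the profile.

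One small caveat: your remark that the empty coalition ``imposes no requirement'' is really a convention. Read literally, the paper's definition is satisfied vacuously by $S=\varnothing$ with $Q=P$. So coalitions must be taken nonempty (equivalently, one must require $\phi(Q)\neq\phi(P)$) for any rule to be CSP. You adopt exactly this reading, which is the intended one.
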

If we consider anonymous functions in  the case where $V=\mathbb N$, we get the following corollary to the previous theorem.

\begin{corollary}[The countably infinite anonymous strict case]\label{prima rappresentazione}
If the voters of a countable infinite society can only express strict preference between two alternatives $a,\, b,\,$
the class $\SCF_{AN}^s$ consists exactly of the functions $\phi_\F$ defined in Theorem \ref{41} where $\F$ runs along the chain
\[
\G_0\supset \cdots \supset \G_k\supset \G_{k+1}\supset\cdots
\supset \G_\infty\supset \G_{\co-\infty}
\supset\cdots\supset \G_{\co-(k+1)}\supset \G_{\co-k}\supset\cdots\supset \G_{\co-0}.
\]
\end{corollary}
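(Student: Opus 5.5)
The plan is to combine Theorem \ref{41}, with $V=\N$, and Theorem \ref{classificazione}. The proof reduces to one equivalence: for a SOC family $\F\subseteq 2^\N$, the function $\phi_\F$ is anonymous if and only if $\F$ is anonymous in the sense of the definition in Section \ref{sezione3}.

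First take $\phi\in\SCF_{AN}^s$. By Theorem \ref{41} there is a SOC family $\F$ with $\phi=\phi_\F$. We then check that $\F$ is anonymous. Let $E\in\F$ and let $\pi$ be a permutation of $\N$. Choose the profile $P\in\Pow^s$ with $D(a,P)=E$; this is possible, and unique, because under strict preferences $D(b,P)=E^c$. Then $\phi(P)=a$. Anonymity gives $\phi(P\circ\pi^{-1})=a$, and by the observation in Section \ref{sezione2},
\[
D(a,P\circ\pi^{-1})=(\pi^{-1})^{-1}(E)=\pi(E).
\]
Hence $\pi(E)\in\F$, so $\F$ is ANSOC. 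Theorem \ref{classificazione} then places $\F$ in the displayed chain.

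Conversely, let $\F$ be one of the listed families. Theorem \ref{41} already gives that $\phi_\F$ is CSP on $\Pow^s$. For anonymity, note that each family in the chain is defined only through $|E|$, through whether $E$ is infinite, or through $|E^c|$. A bijection of $\N$ preserves all of these, so $E\in\F$ if and only if $\pi^{-1}(E)\in\F$. Since $D(a,P\circ\pi)=\pi^{-1}(D(a,P))$, this gives $\phi_\F(P\circ\pi)=\phi_\F(P)$.

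No step here is a real obstacle, since the substantive work is done by the two cited theorems. The only points to watch are two. One is the direction of the permutation, handled by using $\pi^{-1}$ above. The other is that every subset $E\subseteq\N$ arises as $D(a,P)$ for some strict profile, which is what lets anonymity of $\phi$ transfer to every member of $\F$.
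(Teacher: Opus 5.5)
Your proposal is correct and follows essentially the same route as the paper. Theorem \ref{41} with $V=\N$ gives a SOC family; a strict profile with $D(a,P)=E$ transfers the anonymity of $\phi$ to $\F$; Theorem \ref{classificazione} finishes; and the converse rests on the permutation invariance of the listed families. The only cosmetic difference is that you apply anonymity with $\pi^{-1}$ to get $\pi(E)\in\F$ directly, whereas the paper obtains $\pi^{-1}(E)\in\F$ and appeals to the arbitrariness of $\pi$.
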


\begin{proof}
If $\F$ is ANSOC, then the rule $\phi_\F$ is anonymous and CSP. Conversely, let $\phi$ be anonymous and CSP. By the preceding theorem, $\phi=\phi_\F$ for some SOC family $\F$. We prove that $\F$ is anonymous. Take $E\in\F$ and a permutation $\pi$ of $\N$. Since the domain is universal, choose a strict profile $P$ such that $D(a,P)=E$. Then $\phi(P)=a$. By anonymity, $\phi(P\circ\pi)=a$. Since $D(a,P\circ\pi)=\pi^{-1}(E)$, we get $\pi^{-1}(E)\in\F$. As $\pi$ is arbitrary, $\F$ is anonymous. The classification theorem now applies.
\end{proof}

We now rewrite this result in the language of {\it characters} introduced  in Basile, Bhaskara Rao and Rao \cite{scw}. Let
\[
T=\N_0\cup\{\infty\}\cup\{k^c:k\in\N_0\}.
\]
We endow $T$ with the linear order 
\[
0<_ T1<_T2<_T\cdots<_Tk<_Tk+1<_T\cdots<_T\infty<_T\cdots<_T(k+1)^c<_Tk^c<_T\cdots<_T1^c<_T0^c.
\]
Figure 1 below represents $T$ on a compact interval of the real line. 
Thus $0$ is the minimum and $0^c$ is the maximum.

$$\begin{tikzpicture}[xscale=0.38, yscale=0.40]
\draw [->] [thick] (0,0) -- (43,0);
\node at (20,-1) {$\infty$};  \node at (10,-1){...}; \node at (18.2,-1){...}; \node at (30,-1){...}; \node at (22.3,-1){...};
{\footnotesize 
\node at (0,-1) {$0$};
\node at (1,-1) {$1$};
\node at (2,-1) {$2$};\node at (3,-1) {$3$}; \node at (4,-1) {$4$};
\node at (40,-1) {$0^c$};
\node at (39,-1) {$1^c$};\node at (38,-1) {$2^c$}; \node at (37,-1) {$3^c$};\node at (36,-1) {$4^c$};
\node at (16.5,-1) {$n$}; \node at (25,-1) {$k^c$}; 
\draw[fill] (0,0) circle [radius=0.17]; \draw[fill] (20,0) circle [radius=0.17]; \draw[fill] (40,0) circle [radius=0.17];
\draw[fill] (1,0) circle [radius=0.1]; \draw[fill] (2,0) circle [radius=0.1]; \draw[fill] (3,0) circle [radius=0.1];
\draw[fill] (4,0) circle [radius=0.1]; \draw[fill] (16.5,0) circle [radius=0.1]; \draw[fill] (25,0) circle [radius=0.1];
\draw[fill] (39,0) circle [radius=0.1]; \draw[fill] (38,0) circle [radius=0.1]; \draw[fill] (37,0) circle [radius=0.1];
\draw[fill] (36,0) circle [radius=0.1]; 
}
\node at (20,-3) {Figure 1: the linearly ordered set $T$};
\end{tikzpicture}
$$
Although the following remark is elementary, we explicitly record it.

\begin{remark}\label{due}
{\rm In $(T,\le_T)$ every nonempty subset $X$  has $\sup$ and $\inf$. In greater detail:

$\O\neq X\cap [0,\infty] \Rightarrow X$ has minimum.

$\O= X\cap [0,\infty],$ i.e. $ X\subseteq ]\infty, 0^c], \Rightarrow $ either $X$ has minimum in some $n^c$ or $\inf X=\infty$.

$\O\neq X\cap  [\infty, 0^c] \Rightarrow X$ has maximum.

$\O= X\cap  [\infty, 0^c],$ i.e. $ X\subseteq [0, \infty[, \Rightarrow $ either $X$ has maximum in some $n\in \N_0$ or $\sup X=\infty$.

\noindent
Consequently, for a nonempty subset $X\subseteq T$ which has no minimum (respectively, maximum) one has $\inf X=\infty$ (respectively, $\sup X=\infty$).}
\end{remark}

\begin{definition}[Cardinality signature]
For $E\subseteq\N$, the cardinality signature of $E$ is the element $\siga(E)\in T$ defined by
\[
\siga(E)=
\begin{cases}
 k, & \text{if }E\text{ is finite and }|E|=k,\\
 \infty, & \text{if }E\text{ is infinite and }E^c\text{ is infinite},\\
 k^c, & \text{if }E\text{ is cofinite and }|E^c|=k.
\end{cases}
\]\end{definition}

\noindent The signature records the permutation-invariant cardinality information of the set. 

The signature is monotone:
\[
E\subseteq F \quad\Longrightarrow\quad \siga(E)\leq_T \siga(F).
\]
In the present, strict-preferences, case we define the character (see Definition 3.8 of Basile, Bhaskara Rao and Rao \cite{scw}) as follows
\[
\rchi_s:\Pow^s\longrightarrow T,
\qquad
\rchi_s(P)=\siga(D(a,P)).
\]

\begin{theorem}[Canonical  representation]\label{44}
On the strict universal domain $\Pow^s$, 
the class $\SCF_{AN}^s$ consists of
all and only the functions
\[
\phi_C(P)=
\begin{cases}
 a, & \text{if }\rchi_s(P)\in C,\\
 b, & \text{if }\rchi_s(P)\notin C,
\end{cases}
\]
where $C$ is a SOC subset of the linearly ordered set $(T,\leq_T)$.
\end{theorem}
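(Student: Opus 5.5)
The plan is to derive Theorem \ref{44} from Corollary \ref{prima rappresentazione} by matching the chain of ANSOC families with the SOC subsets of $(T,\le_T)$ through the signature map $\siga$. Given a SOC subset $C\subseteq T$, set $\F_C=\{E\subseteq\N:\siga(E)\in C\}$. Then $\phi_C=\phi_{\F_C}$, since $D(a,P)\in\F_C$ exactly when $\rchi_s(P)\in C$. So it is enough to prove two facts: $\F_C$ is ANSOC for every SOC subset $C$, and every ANSOC family arises as $\F_C$ for some SOC subset $C$.

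For the first fact, I use the two basic properties of $\siga$. It is permutation invariant, because $\pi$ is a bijection of $\N$ and so preserves both $|E|$ and $|E^c|$; hence $\F_C$ is anonymous. It is also monotone under inclusion, as noted before the statement, so $E\in\F_C$ and $E\subseteq F$ give $\siga(F)\ge_T\siga(E)\in C$, and therefore $\siga(F)\in C$ because $C$ is SOC. Corollary \ref{prima rappresentazione} then shows that $\phi_C$ is anonymous and CSP. Alternatively, Theorem \ref{41} with the ANSOC family $\F_C$ gives CSP, and anonymity follows directly.

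For the converse, let $\phi$ be anonymous and CSP. By Corollary \ref{prima rappresentazione}, $\phi=\phi_\F$ with $\F$ one of the families in the chain. I then record an explicit dictionary:
\[
\G_k=\F_{[k,0^c]},\quad \G_0=\F_T,\quad \G_\infty=\F_{[\infty,0^c]},\quad \G_{\co-\infty}=\F_{]\infty,0^c]},\quad \G_{\co-k}=\F_{[(k-1)^c,0^c]},\quad \G_{\co-0}=\F_{\varnothing}.
\]
Each of these intervals is an upper set of $T$. Each identity is checked directly from the definition of $\siga$. For example, $|E^c|<k$ holds iff $E$ is cofinite with $|E^c|\le k-1$, which holds iff $\siga(E)\ge_T (k-1)^c$. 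Likewise, $E$ is cofinite iff $\siga(E)>_T\infty$.

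For a cleaner argument that does not rely on the dictionary, I can instead define $C=\siga(\F)=\{\siga(E):E\in\F\}$. The map $\siga$ is onto $T$, and anonymity of $\F$ makes $\F=\F_C$, because two sets with the same signature are images of each other under some permutation of $\N$. This last point is the one step that needs care. If $\siga(E)=\siga(F)$, then $|E|=|F|$ and $|E^c|=|F^c|$, since finite, infinite non-cofinite, and cofinite sets have equinumerous parts in each case. Gluing a bijection $E\to F$ with a bijection $E^c\to F^c$ gives the required permutation. To see that $C$ is SOC, take $t\ge_T\siga(E)$ with $E\in\F$. I would choose $F\supseteq E$ with $\siga(F)=t$, which is possible by enlarging $E$ appropriately in each case of the ordering, and then use that $\F$ is SOC. This shows the representation is also one-to-one. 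I expect the main, though mild, obstacle to be this case check: the existence of a superset $F$ of prescribed larger signature, for example passing from a finite $E$ to an infinite set with infinite complement, or to a cofinite set with $|F^c|=k$.
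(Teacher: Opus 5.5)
Your proof is correct, and for necessity your dictionary coincides with the list of correspondences in the paper's proof. The difference is in sufficiency. The paper obtains it by matching two explicit enumerations: the chain of anonymous committees (Corollary \ref{prima rappresentazione}, resting on Theorem \ref{classificazione}) and a direct enumeration of the upper sets of $T$. A nonempty proper SOC $C$ either has a minimum, and is then $[k,0^c]$, $[\infty,0^c]$ or $[k^c,0^c]$, or has none, and is then $]\infty,0^c]$; the paper checks that every such $C$ appears in the dictionary. You instead prove directly that $\F_C=\{E\subseteq\N:\sigma(E)\in C\}$ is ANSOC for every SOC $C$, using only permutation invariance and monotonicity of $\sigma$, so no enumeration of the upper sets of $T$ is needed. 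Your alternative necessity argument with $C=\sigma(\F)$ goes further. It rests on two elementary facts: sets with equal signature are permutation images of one another, and every $t\ge_T\sigma(E)$ is the signature of some $F\supseteq E$, which indeed holds in all cases. These make $C\mapsto\F_C$ a bijection between SOC subsets of $T$ and ANSOC families. So the theorem follows from Theorem \ref{41} alone, and, combined with the elementary list of upper sets of the chain $T$, Theorem \ref{classificazione} comes out as a by-product. The paper's route is more concrete: it pairs each committee with its interval and thereby names each rule. Yours is structural, and is the one-dimensional version of the transfer behind Theorem \ref{53}. In the weak case, however, your first fact has no analogue, since profiles with the same value of $\rchi_w$ need not be permutations of each other (the indifferent classes may differ in size). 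That is why constancy on the fibres of $\rchi_w$ there requires coalitional strategy-proofness, via the inclusions of Lemma \ref{permutation lemma}, and not anonymity alone.
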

Thus the strict-preference case has a one-dimensional geometrical visualization.  Indeed, {\color{black} since $T$ is linearly ordered, every nonempty proper SOC subset of $T$ is an upper interval, either with a minimum or without a minimum. The only case without a minimum is $]\infty,0^c]$.}

\begin{proof}
See the Appendix. \ref{dim 44}

\end{proof}

 This anticipates what happens with weak preferences. When indifference is allowed, the signature of $D(a,P)$ alone is no longer enough. We must record at the same time the support for $a$ and the support for $b$. Hence the relevant geometry becomes two-dimensional.

\begin{remark}{\rm
\hfill
\begin{enumerate}

\item We observe, to make it unambiguous, that the increase of dimension is due to the possibility of indifference, not to the fact that the voters are countably infinite. 

\item On the other hand, to the infinity of voters
 is due the necessity of replacing  $\{0,1,\dots, n\}$ (relative to the case of $n$ voters) with the set $T$ of more complex, non-trivial, ordered structure. To be explicit, let us recall the  composition 
 of the class $\SCF_{AN}^s$
 in presence of $n$ voters. It is well known that
   $\SCF_{AN}^s$ consists of the {\it quota majority methods} (see Moulin \cite{moulin}, Corollary of page 63). These are the social choice functions $S_k$, with $k\in \{0,1, \dots, n+1\}$,  that are defined by the equivalence \,  $$S_k(P)=a \Leftrightarrow |D(a,P)|\ge k.$$ Mutatis mutandis such functions are nothing else than the 
  $\phi_{\G_0}, \dots, \phi_{\G_k},\dots, \phi_{\G_n},    \phi_{\G_{co-0}}$ corresponding, in the sense of the canonical representation Theorem \ref{44}, to the SOC subsets of  $\{0,1,\dots, n\}$.

\item In the countably infinite case three other kinds of  scfs emerge in $\SCF_{AN}^s$:

-- the function $\phi_{\mathcal G_\infty}$, selecting $a$ if and only if infinitely many voters support $a$;

-- the function $\phi_{\mathcal G_{co-\infty}}$, selecting $a$ if and only if all but finitely many voters support $a$ (equivalently, since preferences are strict in the present section, if and only if finitely many voters support $b$);

-- the function $\phi_{\mathcal G_{co-k}}$ ( $k=1,2,\dots$), selecting $a$ if and only if fewer than $k$ voters support $b$.

Being the remaining scfs \lq\lq same as\rq\rq\, the quota majority rules: $\phi_{\mathcal G_0}$ is the constant rule selecting $a$ for every profile; $\phi_{\mathcal G_k}$ selects $a$ if and only if at least $k$ voters support $a$; $\phi_{\mathcal G_{co-0}}$ is the constant rule selecting $b$ for every profile.

\end{enumerate}}
\end{remark}

\section{The universal weak domain and the infinite grid}\label{sezione5}

We now return to the universal weak domain
$
\Pow=\W(\{a,b\})^\N.
$
Let us denote by $\SCF_{AN}$ the class of all scfs  $\phi:\Pow\to\{a,b\}$ which are anonymous and CSP.

For every profile $P$, the sets $D(a,P)$ and $D(b,P)$ are disjoint, and the remaining voters are indifferent between $a$ and $b$. 
{\color{black}
For the collective-choice analysis below we record the cardinality signatures \[
\siga(D(a,P)),\qquad \siga(D(b,P)).
\]

These do not in general determine the complete orbit of P under permutations of the voters, because the cardinality of the indifferent class is not separately recorded. We shall see, however, that they contain exactly the ordered cardinality information relevant for anonymous CSP rules.
}

Admitting indifference,  the poset relevant for our collective choice analysis is $T\times T$ endowed with the partial ordering defined as follows:

\noindent for $(\alpha,\beta),(\alpha',\beta')\in T\times T$, set
\[
(\alpha,\beta)\sqsubseteq (\alpha',\beta')
\quad\Longleftrightarrow\quad
\alpha\leq_T\alpha'\text{ and }\beta\geq_T\beta'.
\]
Thus moving upward in the order means: more support for $a$, and less support for $b$.
 
\noindent The character (see Definition 3.8 of Basile, Bhaskara Rao and Rao \cite{scw}) is now defined by
\[
\rchi_w:\Pow\longrightarrow T\times T,
\qquad
\rchi_w(P)=\bigl(\siga(D(a,P)),\siga(D(b,P))\bigr),
\]
 clearly the countably infinite counterpart of the map (see \cite{mss}) \(P\mapsto (|D(a,P)|,|D(b,P)|)\)
used in the finite-voter grid representation. 
{\color{black}
Unlike in the finite-voter case, the character $\rchi_w$ does not in general determine the anonymity orbit of a weak-preference profile. Indeed, two profiles may have the same signatures for the coalitions strictly supporting a and b, while the corresponding sets of indifferent voters have different cardinalities. For instance. Let $P$ such that
$D(a, P)$ and $D(b, P)$ form a partition of $\N$ into two infinite sets, so that $I(P)=\O$. Let $Q$ be such that $D(a, Q)$, $D(b, Q)$, and $I(Q)$ are all infinite. Then $\rchi_w(P)=\rchi_w(Q)=(\infty,\infty),$ although $P$ and $Q$ are not related by any permutation of the voters. Nevertheless, as the representation theorem below will show,  the character records exactly the ordered cardinality information relevant for anonymous CSP rules.
}

\begin{proposition}[The range of the weak character]
The range of $\rchi_w$ is  the following set $G$:
\begin{align*}
G={}&(\N_0\times\N_0)
\cup (\N_0\times\{\infty\})
\cup (\{\infty\}\times\N_0)
\cup \{(\infty,\infty)\}\nonumber\\
&\cup \{(k,m^c): k,m\in\N_0,\ k\leq m\} \cup \{(k^c,m): k,m\in\N_0,\ m\leq k\} \nonumber
\end{align*}
\end{proposition}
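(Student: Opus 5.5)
We prove the equality $\rchi_w(\Pow)=G$ by double inclusion. The only structural fact needed about profiles is that $D(a,P)$ and $D(b,P)$ are disjoint. Conversely, any pair of disjoint subsets $A,B\subseteq\N$ arises from some profile $P$ with $D(a,P)=A$ and $D(b,P)=B$: declare the voters of $A$ to prefer $a$, those of $B$ to prefer $b$, and all others indifferent. This works because $\Pow=\W^\N$ is the universal domain. So the statement reduces to a purely set-theoretic one: $G$ is exactly the set of pairs $(\siga(A),\siga(B))$ with $A\cap B=\varnothing$.

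\emph{Inclusion $\rchi_w(\Pow)\subseteq G$.} Let $A,B$ be disjoint and split into cases according to the type of $\siga(A)$.

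If $A$ is finite, then $B\subseteq A^c$. Either $B$ is finite, giving a point of $\N_0\times\N_0$, or $B$ is infinite with infinite complement, giving $\N_0\times\{\infty\}$, or $B$ is cofinite. In the last case $A\subseteq B^c$, so $|A|\le|B^c|$, which is the condition $k\le m$ for the pair $(k,m^c)$.

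If $A$ is infinite and not cofinite, then $B$ cannot be cofinite, since a cofinite set meets every infinite set. Hence $\siga(B)\in\N_0\cup\{\infty\}$, giving $\{\infty\}\times\N_0$ or $(\infty,\infty)$.

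If $A$ is cofinite, then $B\subseteq A^c$ is finite with $|B|\le|A^c|$. This gives $(k^c,m)$ with $m\le k$.

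Since the cases are exhaustive, every value of $\rchi_w$ lies in $G$.

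\emph{Inclusion $G\subseteq\rchi_w(\Pow)$.} For each of the six pieces of $G$ we exhibit disjoint sets $A,B$ with the prescribed signatures, and then invoke the universal-domain remark above. Throughout, let $\mathrm{Ev}$ and $\mathrm{Od}$ denote the even and odd naturals.

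For $(k,m)\in\N_0\times\N_0$, take $A=\{1,\dots,k\}$ and $B=\{k+1,\dots,k+m\}$.

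For $(k,\infty)$, take $A=\{1,\dots,k\}$ and $B$ the even numbers greater than $k$; then $B$ is infinite and its complement contains the odd numbers, so it is infinite too. The point $(\infty,m)$ is symmetric.

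For $(\infty,\infty)$, take $A=\mathrm{Ev}$ and $B=\mathrm{Od}$.

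For $(k,m^c)$ with $k\le m$, take $B=\{m+1,m+2,\dots\}$, so that $\siga(B)=m^c$. Then take $A=\{1,\dots,k\}$, which is contained in $B^c=\{1,\dots,m\}$ precisely because $k\le m$.

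The point $(k^c,m)$ with $m\le k$ is symmetric: take $A=\{k+1,k+2,\dots\}$ and $B=\{1,\dots,m\}$.

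The only step needing care is the first inclusion, and specifically the exclusion of pairs such as $(\infty,m^c)$, $(k^c,\infty)$, $(k^c,m^c)$ and $(k,m^c)$ with $k>m$. All of these are ruled out by the same two observations used above: a cofinite set intersects every infinite set, and a set disjoint from $B$ is contained in $B^c$, so its cardinality is bounded by $|B^c|$.
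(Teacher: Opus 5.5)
Your proof is correct and follows the paper's route: a double inclusion. The forward direction is a case analysis on the signatures of the disjoint sets $D(a,P)$ and $D(b,P)$, using that a set disjoint from a cofinite set lies in its finite complement. The reverse direction builds disjoint sets with the prescribed signatures and assigns preferences accordingly; your explicit constructions just fill in what the paper calls ``immediate''.
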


\begin{proof}
Let $P\in\Pow$. Since $D(a,P)$ and $D(b,P)$ are disjoint, their signatures must be compatible.

If both sets are finite, we get a point of $\N_0\times\N_0$. If one is finite and the other is infinite but not cofinite, we get a point of $\N_0\times\{\infty\}$ or $\{\infty\}\times\N_0$. If both are infinite and not cofinite, we get $(\infty,\infty)$.

If $D(b,P)$ is cofinite and $|D(b,P)^c|=m$, then $D(a,P)$ is finite and contained in $D(b,P)^c$; hence $\siga(D(a,P))=k$ for some $k\leq m$, and the point is $(k,m^c)$. Symmetrically, if $D(a,P)$ is cofinite with $|D(a,P)^c|=k$, then $D(b,P)$ is finite of cardinality $m\leq k$, and the point is $(k^c,m)$.

This proves that $\chi_w(\Pow)\in G$. Conversely, for every point listed in the definition of $G$, it is immediate to construct two disjoint subsets $E,F\subseteq\N$ having the required signatures. Then define a profile $P$ by making the voters in $E$ strictly prefer $a$, the voters in $F$ strictly prefer $b$, and all remaining voters indifferent. Hence every point of $G$ belongs to the range of $\rchi_w$.
\end{proof}
\noindent
The range of $\rchi_w$ can be visualized as the infinite triangular grid of Figure 2.

$$
\begin{tikzpicture}[xscale=0.35, yscale=0.28]
\draw [help lines][dashed, ultra thin] (0,0) grid (40,40);
\draw [<->] (45,0) -- (0,0) -- (0,45);
\draw (40,0) -- (0,40);
\draw [thick](20,0) -- (20,20) -- (0,20);
\draw [fill=white, ultra thick,white ] (40,40) -- (40,0) -- (0,40) -- (40,40);
\node at (20,-1) {$\infty$}; \node at (-1,20) {$\infty$}; \node at (10,-1){...}; \node at (18.2,-1){...}; \node at (30,-1){...}; \node at (22.3,-1){...};
\draw  (40,0) -- (0,40);
{\footnotesize 
\node at (0,-1) {$0$};
\node at (1,-1) {$1$};
\node at (2,-1) {$2$};\node at (3,-1) {$3$}; \node at (4,-1) {$4$};
\node at (40,-1) {$0^c$};
\node at (39,-1) {$1^c$};\node at (38,-1) {$2^c$}; \node at (37,-1) {$3^c$};\node at (36,-1) {$4^c$};
\node at (-1,0) {$0$}; \node at (-1,1) {$1$}; \node at (-1,2) {$2$}; \node at (-1,3) {$3$};\node at (-1,4) {$4$};
\node at (-1,40) {$0^c$};\node at (-1,39) {$1^c$}; \node at (-1,38) {$2^c$};\node at (-1,37) {$3^c$};\node at (-1,36) {$4^c$}; \node at (-1,10) {$\vdots$}; \node at (-1,18) {$\vdots$}; \node at (-1,22) {$\vdots$}; \node at (-1,30) {$\vdots$};
\node at (-1,23.5) {$n^c$};
\node at (16.5,-1) {$n$}; \node at (25,-1) {$k^c$}; \node at (-1,15) {$k$};
\draw (0,23.5)-- (16.5,23.5); \draw (16.5,0)-- (16.5,23.5);
\draw (0,15)-- (25,15); \draw (25,15)-- (25,0);
}
\node at (20,-3) {Figure 2: the grid $G=\rchi_w(\Pow)$};
\node at (44,-2) {$T$}; \node at (-2,44) {$T$};
\end{tikzpicture}
$$
The restriction of  $\sqsubseteq$  to $G$ is assumed to be the order $\leq_G$, and the canonical scf  (see Definition 3.9 of Basile, Bhaskara Rao and Rao \cite{scw}) is defined next.
\begin{definition}
For every  subset $C$ of the poset $( G, \le_G)$,  define the  scf $\phi_C$ by means of 
\[
\phi_C(P)=a
\quad\Longleftrightarrow\quad
\rchi_w(P)\in C.
\]
\end{definition}
\noindent Observe that the rule is anonymous because the signature of a subset of $\N$ is invariant under permutations of the voters. Conversely, every social choice function constant on each fiber of $\rchi_w$ is of this form for a unique subset $C\subseteq G$.

\noindent We can now state the main representation theorem.

\begin{theorem}[Infinite-grid canonical representation]\label{53}
Let $\phi:\Pow\to\{a,b\}$ be a scf. Then $\phi\in\SCF_{AN}$ if and only if there exists a unique SOC subset $C$ of the poset $(G,\leq_G)$ such that
$
\phi=\phi_C$.
\end{theorem}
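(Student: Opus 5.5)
The proof has two directions plus uniqueness. Uniqueness is immediate: $\rchi_w$ is onto $G$ by the preceding proposition, so $\phi_C=\phi_{C'}$ forces $C=C'$.

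\emph{Sufficiency.} Let $C$ be SOC in $(G,\le_G)$. Anonymity of $\phi_C$ was already observed, since signatures are permutation invariant. For coalition strategy-proofness, suppose a coalition $S$ strongly manipulates $P$ via $Q$.

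Consider first the case $\phi_C(P)=b$ and $\phi_C(Q)=a$. Every member of $S$ strictly prefers $a$ under $P_i$, so $S\subseteq D(a,P)$. Outside $S$ the reports agree, so $D(a,Q)\subseteq D(a,P)$ and $D(b,Q)\supseteq D(b,P)$. Monotonicity of $\sigma$ then gives $\rchi_w(Q)\le_G\rchi_w(P)$. Since $C$ is SOC and $\rchi_w(Q)\in C$, we get $\rchi_w(P)\in C$, a contradiction.

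The case $\phi_C(P)=a$, $\phi_C(Q)=b$ is symmetric: there $S\subseteq D(b,P)$ and the inequality is reversed. This direction is routine.

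\emph{Necessity.} Let $\phi$ be anonymous and CSP. I would proceed in two steps.

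\textbf{Step 1: monotonicity on profiles.} Show that if $D(a,P)\subseteq D(a,P')$ and $D(b,P)\supseteq D(b,P')$, then $\phi(P)=a$ implies $\phi(P')=a$. Pass from $P$ to $P'$ in two moves, changing only the voters in $D(a,P')\setminus D(a,P)$ or $D(b,P)\setminus D(b,P')$. Each move is a coalition switching toward $a$. If the outcome dropped from $a$ to $b$, the reverse deviation would be a strong manipulation of the new profile by that coalition, since each of its members strictly prefers $a$ there, or is indifferent. Indifferent voters break this argument, so I would handle them as follows. Move the voters gaining strict $a$-preference, or losing strict $b$-preference, one group at a time. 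For a group going from $b\succ a$ to indifference, argue that if the outcome went from $a$ to $b$, the group in $P$ strictly prefers $b$ yet gets $a$, so deviating to the new reports manipulates $P$. For a group going from indifference to $a\succ b$, argue that if the outcome went from $a$ to $b$, the group in the new profile strictly prefers $a$ and can revert. Using only these two move types covers all cases (via $b\succ a\to$ indifference $\to a\succ b$).

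\textbf{Step 2: from profiles to the grid.} Define $C=\rchi_w(\{P:\phi(P)=a\})$. Showing $\phi=\phi_C$ and that $C$ is SOC reduces to a \emph{permutation lemma}, which I expect to be the main obstacle:
\begin{quote}
if $\rchi_w(P)\le_G\rchi_w(P')$, then there is a permutation $\pi$ with $D(a,P\circ\pi)\subseteq D(a,P')$ and $D(b,P\circ\pi)\supseteq D(b,P')$.
\end{quote}
Given this lemma, anonymity together with Step 1 shows that $\phi(P)=a$ implies $\phi(P')=a$. Applying it in both directions when $\rchi_w(P)=\rchi_w(P')$ shows that $\phi$ is constant on fibres. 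The same statement then gives the SOC property of $C$.

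To prove the lemma, write $A=D(a,P)$, $B=D(b,P)$, $I=I(P)$, and $A',B',I'$ for the corresponding sets of $P'$. I need a bijection $\pi^{-1}$ sending $A$ into $A'$ and $B'$ into $B$, which amounts to partitioning $\N$ compatibly. Concretely, I must find disjoint sets $A_0\subseteq A'$ with $|A_0|=|A|$ and $B_0\supseteq B'$ with $B_0\cap A_0=\emptyset$, such that $|B_0|=|B|$ and the remainder $\N\setminus(A_0\cup B_0)$ has cardinality $|I|$ (all three cardinalities to be matched exactly, including the complement cardinalities).

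I would do a case analysis over the signature types: finite, $\infty$, and $k^c$. The delicate cases are those where $|I|$ and $|I'|$ differ despite equal signatures, for instance $(\infty,\infty)$ with $I=\emptyset$ versus $I'$ infinite. In such cases the infinite slack inside $A'$ or $B\setminus B'$ must be split to absorb or create indifferent voters. The cofinite cases force finite complements, and there the inequalities $k\le m$ in the definition of $G$ should make the counts match.
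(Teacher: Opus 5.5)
Your architecture is the paper's. Sufficiency comes from monotonicity of $\sigma$. Necessity combines (i) the fact that a CSP rule satisfies $\phi(P)=a\Rightarrow\phi(P')=a$ whenever $D(a,P)\subseteq D(a,P')$ and $D(b,P)\supseteq D(b,P')$, (ii) anonymity, and (iii) a permutation lemma that is exactly Lemma~\ref{permutation lemma} with $\pi^{-1}$ in place of $\pi$. Where the paper invokes Theorem~3.3 and Proposition~3.11 of \cite{scw}, you argue from the definition instead. Your direct manipulation argument for sufficiency and your two-stage argument for (i) are both correct: move from $b\succ a$ to indifference, then from indifference to $a\succ b$. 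They make the proof self-contained. A small slip: it is $\pi$, not $\pi^{-1}$, that must send $B'$ into $B$; your partition formulation gets this right.

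What is still only a plan is (iii), and that is where the infinitary content sits. Your reformulation is correct: find a partition $(A_0,B_0,R)$ of $\N$ with $A_0\subseteq A'$, $B_0\supseteq B'$, $|A_0|=|A|$, $|B_0|=|B|$, $|R|=|I|$. But you have not carried out the case analysis, and a direct check over all comparable pairs of grid points is unwieldy. The paper first reduces to equal signatures. Every $\alpha\le_T\sigma(F)$ is the signature of some subset of $F$, so pick $D'\subseteq A'$ with $\sigma(D')=\sigma(A)$ and $D\subseteq B$ with $\sigma(D)=\sigma(B')$. Any permutation with $\pi^{-1}(A)\subseteq D'$ and $\pi^{-1}(D)\supseteq B'$ also works for the original sets.

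For two disjoint pairs with equal signatures, the third blocks (the indifferent voters) automatically have equal cardinality unless the common signature is $(\infty,\infty)$. Gluing bijections therefore settles every other case. At $(\infty,\infty)$, rename the two partitions $(A,B,I)$ and $(A',B',I')$.
\begin{itemize}
\item If $|I|<|I'|$, then $I$ is finite. Take $A_0=A'$, choose $R\subseteq I'$ with $|R|=|I|$, and set $B_0=B'\cup(I'\setminus R)$.
\item If $|I|>|I'|$, then $I'$ is finite. Take $B_0=B'$ and $R=I'\cup J$, with $J\subseteq A'$ chosen so that $|R|=|I|$ while $A_0=A'\setminus J$ stays infinite.
\end{itemize}
This is exactly the splitting of slack you anticipate. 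Once it is written out, your proof is complete.
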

Its proof is based on the results of Section 3 of Basile, Bhaskara Rao and Rao \cite{scw} and  on the following lemma which is the set-theoretic fact that makes the order on the grid compatible with anonymity. It is this technical point that allows us to pass from comparisons of signatures to comparisons of actual coalitions, after a suitable relabelling of the voters. The proof of the lemma is postponed to the Appendix. \ref{dim permutation lemma}

\begin{lemma}[permutation lemma]\label{permutation lemma}
Let $A_1,B_1,A_2,B_2\subseteq\N$ be such that
\[
A_1\cap B_1=\varnothing,
\qquad
A_2\cap B_2=\varnothing.
\]
Assume
\[
\siga(A_1)\leq_T\siga(A_2),
\qquad
\siga(B_1)\geq_T\siga(B_2).
\]
Then there exists a permutation $\pi$ of $\N$ such that
\[
\pi(A_1)\subseteq A_2,
\qquad
B_2\subseteq \pi(B_1).
\]
\end{lemma}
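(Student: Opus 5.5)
The plan is to build $\pi$ directly as a union of three bijections. First choose a subset $A_2'\subseteq A_2$ with $|A_2'|=|A_1|$ and a subset $B_1'\subseteq B_1$ with $|B_1'|=|B_2|$. This is possible because the signature order implies the cardinal inequalities $|A_1|\le|A_2|$ and $|B_2|\le|B_1|$, all sets being countable. Then define $\pi$ on three pieces:
\begin{itemize}
\item $\pi$ maps $A_1$ bijectively onto $A_2'$;
\item $\pi$ maps $B_1'$ bijectively onto $B_2$;
\item $\pi$ maps $R_1=\N\setminus(A_1\cup B_1')$ bijectively onto $R_2=\N\setminus(A_2'\cup B_2)$.
\end{itemize}
The first two sets are disjoint because $A_1\cap B_1=\varnothing$, and likewise $A_2'\cap B_2=\varnothing$. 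So the three maps glue to a permutation, and it satisfies $\pi(A_1)\subseteq A_2$ and $B_2=\pi(B_1')\subseteq\pi(B_1)$. Everything therefore reduces to choosing $A_2'$ and $B_1'$ so that $|R_1|=|R_2|$, where
\[
R_1=(B_1\setminus B_1')\cup C_1,\qquad R_2=(A_2\setminus A_2')\cup C_2,\qquad C_j=\N\setminus(A_j\cup B_j).
\]

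\textbf{Generic case.} Here we arrange that both remainders are infinite. If $B_1$ is infinite and $B_2$ is not cofinite, then $\siga(B_2)\le_T\infty$. In that case we can choose $B_1'$ with $B_1\setminus B_1'$ infinite: split $B_1$ into two infinite pieces and take $B_1'$ inside one of them, which works whether $B_2$ is finite or of signature $\infty$. This makes $R_1$ infinite. Symmetrically, if $A_2$ is infinite and $A_1$ is not cofinite, we choose $A_2'$ with $A_2\setminus A_2'$ infinite, making $R_2$ infinite. Failing these options, $R_j$ can still be infinite simply because $C_j$ is infinite.

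\textbf{Exceptional cases.} I would run a case analysis on the signatures to identify exactly when a remainder is forced to be finite. This happens only when some set is cofinite or everything is finite-dimensional.
\begin{itemize}
\item If $A_1$ is cofinite with $|A_1^c|=k_1$, then $A_2$ is cofinite with $|A_2^c|=k_2\le k_1$, and $B_1$, $B_2$ are finite. Take $A_2'$ to be $A_2$ minus $k_1-k_2$ points. Then $|R_2|=k_2+(k_1-k_2)-|B_2|$ and $|R_1|=k_1-|B_2|$, which match.
\item Symmetrically, if $B_2$ is cofinite, then $B_1$ is cofinite with complement no larger. Here $A_1$, $A_2$ are finite and the same complement counting works with the roles exchanged.
\item If neither $A_1$ nor $B_2$ is cofinite, then by the generic step both remainders can be made infinite. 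The exceptions are the degenerate configurations where $A_2$ or $B_1$ is finite, or where $C_j$ is finite. For instance, $A_1,B_1$ could partition $\N$ into two infinite sets, which forces the remainder to be filled by $B_1\setminus B_1'$; this is again achievable since $\siga(B_2)\le_T\infty$. In those configurations I would check directly that the freedom in choosing $A_2'$ or $B_1'$ yields infinite remainders on both sides.
\end{itemize}

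The main obstacle is this bookkeeping in the third item: making sure that no configuration forces exactly one of $R_1$, $R_2$ to be finite. The first thing I would try is to show that $R_1$ finite forces $A_1\cup B_1'$ to be cofinite. Then $A_1$ or $B_1'$ is cofinite, or $A_1$ and $B_1'$ are complementary infinite sets up to a finite set. Each of these possibilities pins down the signatures through the two hypotheses $\siga(A_1)\le_T\siga(A_2)$ and $\siga(B_1)\ge_T\siga(B_2)$, and in every such case the explicit choice of $A_2'$ and $B_1'$ above makes $R_2$ match $R_1$ in cardinality.
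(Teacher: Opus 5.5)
Your three-piece gluing is correct and yields a complete proof. The step you flag as the main obstacle is actually immediate, and the detour through ``$R_1$ finite forces $A_1\cup B_1'$ cofinite'' is unnecessary.

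In the third case (neither $A_1$ nor $B_2$ cofinite), $R_1$ depends only on $B_1'$ and $R_2$ only on $A_2'$, so the two choices are independent.
\begin{itemize}
\item If $B_1$ is infinite, your splitting trick makes $B_1\setminus B_1'\subseteq R_1$ infinite. If $B_1$ is finite, then $R_1\supseteq \N\setminus(A_1\cup B_1)$, which is infinite because $A_1$ is not cofinite and $B_1$ is finite.
\item Symmetrically, $R_2$ is infinite. Either $A_2$ is infinite and the leftover $A_2\setminus A_2'$ is infinite, or $A_2$ is finite and $B_2$ is not cofinite.
\end{itemize}
So both remainders are countably infinite in every such configuration. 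Finiteness of some $C_j$ needs no separate treatment: for instance, $C_1$ finite forces $B_1$ to be infinite, since $A_1$ is not cofinite. Your two cofinite cases are computed correctly.

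This differs from the paper's route. The paper first passes to subsets $D_2\subseteq A_2$ and $D_1\subseteq B_1$ with $\siga(D_2)=\siga(A_1)$ and $\siga(D_1)=\siga(B_2)$, which reduces the lemma to equal signatures. It then splits into four signature cases. In three of them it obtains an exact matching of the reduced sets. The only real bookkeeping occurs when $\siga(A_i)=\siga(B_i)=\infty$. There the indifferent sets $C_1,C_2$ may have different cardinalities, and the paper splits the larger one and lets $B_1$ (or, symmetrically, $A_2$) absorb the excess.

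Your infinite leftovers in $B_1$ and $A_2$ are the same slack, but built in from the start. You therefore never compare $|C_1|$ with $|C_2|$, and all non-cofinite configurations are handled uniformly. Only the cofinite cases need explicit finite counting, which the paper's reduction to equal signatures makes automatic. The paper's organization isolates the single configuration in which the indifferent classes matter. Yours merges the reduction and the gluing into one step and is slightly shorter.
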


\begin{proof} [Proof of Theorem \ref{53}]
\noindent
First let $C$ be a SOC subset of $(G,\leq_G)$. We prove that $\phi_C$ is CSP by appealing to Theorem 3.3 in Basile, Bhaskara Rao and Rao \cite{scw}, namely by showing the almost monotonicity of $\phi_C$. 

First observe that, due to Proposition 3.11 in Basile, Bhaskara Rao and Rao \cite{scw}, $\phi_C$ is $(\rchi_w,a)$-monotone (Definition 3.10 in Basile, Bhaskara Rao and Rao \cite{scw}). We recall that this means that if for the profiles $P, Q$ one has $\rchi_w(P)\ge_G \rchi_w(Q)$ and $\phi_C(Q)=a$, then $\phi_C(P)=a$.

\noindent Now, to show the almost monotonicity of $\phi_C$, let us assume that $\quad  \left\{ 
\begin{array} {ll}
D(a, Q)\subseteq D(a, P)  \\

D(b, Q)\supseteq D(b, P) .\\
\end{array}
\right.
$ 
Then by monotonicity of the signature one has $\rchi_w(P)\ge_G \rchi_w(Q)$. So the almost monotonicity of  $\phi_C$ simply comes from its $(\rchi_w,a)$-monotonicity.

\bigskip
\noindent
Conversely, assume that $\phi\in\SCF_{AN}$. If we show that $\phi$ is $(\rchi_w,a)$-monotone,  then by applying again Proposition 3.11 in Basile, Bhaskara Rao and Rao \cite{scw}, we obtain the assertion.

\noindent So, let us assume that for the profiles $P, Q$ one has $\rchi_w(P)\ge_G \rchi_w(Q)$ and $\phi(Q)=a$. We have to show that $\phi(P)=a$.

The inequality $\rchi_w(P)\ge_G \rchi_w(Q)$ allows to apply Lemma \ref{permutation lemma} determining the existence of a permutation $\pi$ of $\mathbb N$ such that 
$$\quad  \left\{ 
\begin{array} {ll}
D(a, Q)\subseteq D(a, P\circ \pi)  \\

D(b, Q)\supseteq D(b, P\circ \pi) .\\
\end{array}
\right.
$$
Since $\phi$ is CSP, again by Theorem 3.3 in Basile, Bhaskara Rao and Rao \cite{scw} we get 
$\phi(P\circ \pi)=a$, and since $\phi$ is also anonymous, we also get $\phi(P)=a.$
\end{proof}
{\color{black}A direct consequence of Theorem \ref{53} is that, for any fixed anonymous CSP rule, all profiles at which both strict-support coalitions are infinite receive the same outcome, since their character is $(\infty,\infty)$. This remains true whether the indifferent class is empty, finite, or infinite. The conclusion uses both axioms, since profiles with different cardinalities of the indifferent class need not be related by a permutation. The outcome at this common character may be either $a$ or $b$, depending on the rule. Anonymity imposes symmetry among voters; the resolution of this common character remains part of the specification of the rule.
}

The representation theorem reduces the description of anonymous CSP binary social choice functions with countably many voters to the description of the SOC subsets of the infinite grid $G$. The latter is a geometric and order-theoretic problem. In the finite-voter case, this is where the vertical-horizontal zigzag representation of Basile, Rao and Rao \cite{mss} enters. In the present countably infinite setting, the geometric description is richer because the grid contains finite, infinite non-cofinite, and cofinite signatures. It will be developed separately, completing the extension of \cite{mss}.

 \section{The geometry of the SOC subsets of  $(G,\leq_G)$}\label{sei}
 
 For a social choice function $\phi=\phi_C$, with $C\subseteq G$, the set
 $G(a)=\{\chi_w(P)\in G:\phi(P)=a\}=C$
is the region of the grid where the social choice is \(a\). The representation theorem \ref{53} says that \(\phi\) is coalition strategy-proof exactly when \(G(a)\) is a SOC subset of $(G,\leq_G)$. We therefore turn to the geometric description of the SOC subsets of this grid.

The following analysis reveals that such a SOC region must have one of the generalized zigzag shapes I--IV (Section \ref{four geometries}, {\it infra}). Thus, the finite-voter zigzag picture of \cite{mss} is preserved, but it acquires two possible missing-boundary components, the \(J_{k,\infty}\)- and \(K_{\infty,\ell}\)\,-regions defined next, which are specific to the countably infinite society.
{\color{black}
A first essential difference from the finite case is that a SOC subset of $G$ need not be generated by the antichain of its minimal elements; consequently, the geometric description requires additional boundary data.
}

\subsection{Some elementary regions of $G$ and corners}
Set 
\[
E=\{(\alpha,\beta)\in G:\alpha<\infty,
\ \beta>\infty\},
\] and \[
F=\{(\alpha,\beta)\in G:\alpha>\infty,
\ \beta<\infty\}.
\]

With reference to figure 2, these are, respectively,  the north-west triangle of vertices $(0,\infty), (0, 0^c), (\infty, \infty)$ without the whole base, and the south-east triangle of vertices $(\infty,0),  (\infty, \infty) (0^c, 0)$ without the whole height.

By setting $
\Nb=\Nzero\cup\{\infty\},
$
we also explicitly write the decomposition  

\[
G=E\cup\Nb^2\cup F.
\]

 \begin{definition}\label{61}
For \((\alpha,\beta)\in G\), set
\[
I_{\alpha,\beta}
=\up(\alpha,\beta)
=\{(\alpha',\beta')\in G:\alpha'\ge_T\alpha,
\ \beta'\le_T\beta\}.
\]
For \(k\in\Nb\), set
\[
J_{k,\infty}
=\{(\alpha,\beta)\in G:\alpha\ge_T k,
\ \beta<_T\infty\}.
\]
For \(\ell\in\Nb\), set
\[
K_{\infty,\ell}
=\{(\alpha,\beta)\in G:\alpha>_T\infty,
\ \beta\le_T\ell\}.
\]
\end{definition}
\noindent Thus \(J_{k,\infty}\) is obtained from \(I_{k,\infty}\) by deleting the whole base line at ordinate \(\infty\).
The region \(K_{\infty,\ell}\) is obtained from \(I_{\infty,\ell}\) by deleting the whole height line at abscissa \(\infty\).
Notice that
\[
K_{\infty,\ell}\subseteq J_{k,\infty} \mbox{ for every }k, \ell\in \Nb, \mbox{ and }
K_{\infty,\infty}=F.
\]


{\color{black}
The regions just defined have a direct interpretation in terms of strict support, strict opposition, and indifference. For $k\in\mathbb N_0$,

$$
\chi_w(P)\in J_{k,\infty}
\quad\Longleftrightarrow\quad
|D(a,P)|\ge k
\quad\text{and}\quad
|D(b,P)|<\infty.
$$

Thus $J_{k,\infty}$ combines a threshold on strict support for $a$ with the requirement that only finitely many voters strictly oppose $a$. When $k=\infty$, the support condition requires infinitely many strict supporters of $a$. In either case, the indifferent class may be finite or infinite.

For $\ell\in\mathbb N_0$,

$$
\chi_w(P)\in K_{\infty,\ell}
\quad\Longleftrightarrow\quad
|D(b,P)|\le\ell
\quad\text{and}\quad
|D(b,P)\cup I(P)|<\infty.
$$

The second condition says that all but finitely many voters strictly support $a$: both strict opponents and indifferent voters belong to the finite exceptional set. The first condition additionally bounds the number of strict opponents. When $\ell=\infty$, only the cofiniteness requirement remains, giving $K_{\infty,\infty}=F$.

The distinction between finite opposition and cofinite strict support is possible because indifference is admitted. Without indifferent voters, the two properties coincide. For example, a profile with infinitely many strict supporters of $a$, finitely many strict supporters of $b$, and infinitely many indifferent voters belongs to $J_{\infty,\infty}$ but to no $K_{\infty,\ell}$.

These interpretations also clarify the missing boundaries. Passing from $I_{k,\infty}$ to $J_{k,\infty}$ excludes the boundary profiles where the strict opposition has signature $\infty$. Passing from $I_{\infty,\ell}$ to $K_{\infty,\ell}$ excludes those where strict support for $a$ has signature $\infty$. The resulting requirements impose finiteness without a common finite bound: on the number of strict opponents in the first case, and on the number of voters who fail to strictly support $a$ in the second.

For the elementary rules $\phi_{J_{k,\infty}}$ and $\phi_{K_{\infty,\ell}}$, the respective membership conditions are necessary and sufficient for choosing $a$. When these regions occur as components of a union representing a SOC set $C$, each supplies a sufficient condition for $\phi_C$ to choose $a$.
}

{\color{black}
\noindent 
For a numerical illustration, consider the rule whose winning region is
\[
S=\uparrow\{(3,3),(9,7),(11,13)\}\cup J_{16,\infty},
\]
depicted later in Figure 4. In this and later similar  descriptions, supporters (of $a$) strictly prefer $a$ to $b$, opponents strictly prefer $b$ to $a$, and all remaining voters are indifferent.

This rule chooses \(a\) if and only if
at least one of four conditions holds: there are at least 3 supporters and at most 3 opponents; at least 9 supporters and at most 7 opponents; at least 11 supporters and at most 13 opponents; or at least 16 supporters and only finitely many opponents. The first three conditions define the region generated by the three displayed grid points. The fourth is the contribution of $J_{16,\infty}$: once support reaches 16 voters, opposition may exceed any prescribed finite bound, provided that it remains finite.

For example, with 12 supporters and 14 opponents the outcome is $b$, whereas with 16 supporters and 14 opponents it is $a$. With 16 supporters and infinitely many opponents, the outcome is again $b$, illustrating the excluded boundary. The rule thus allows progressively greater opposition as support increases. Each condition is preserved when the support coalition expands and the opposition coalition contracts; the resulting winning region is SOC and therefore defines an anonymous coalition strategy-proof rule by Theorem 5.3.
}

For a subset \(S\subseteq G\), we use the standard notation
$\,
S_\alpha=\{\beta\in T:(\alpha,\beta)\in S\},
$\, and
$\,
S^\beta=\{\alpha\in T:(\alpha,\beta)\in S\}
$\,
for, respectively, the vertical section at abscissa \(\alpha\), and
the horizontal section at ordinate \(\beta\).

As recalled in \cite{scw}, SOC subsets of a finite poset are generated by the antichain made of their minimal elements. In the present infinite grid this is no longer true. This is the reason why to determine the shape of the SOC subsets of $G$ we cannot refer only to minimal points.

\begin{example}[{\it A SOC subset not generated by an antichain}]\label{62}
{\rm Consider the following rule: the collective choice is $a$ if and only if  the voters supporting $b$ or  indifferent between the two alternatives are finitely many. In other words: the set of voters in favor of $a$ is cofinite.
The corresponding set $G(a)$ is the set $F$ which is SOC. 
However, it cannot be generated by an antichain.}

\noindent {\rm To see this, let $A\subseteq F$ be an antichain. Because of next Proposition \ref{prop:infinite-antichains} $A$ is necessarily finite. But in this case $A$ cannot generate all points of $F$.
}
\end{example}
\noindent On the other hand, $F$ has a very simple generator: 
\[
F=\up \{(k^c,k):k\in\Nzero\} .
\]
Indeed, if $(k^c,m)\in F$, then $m\le k$, and therefore
\[
(k^c,m)\ge_G(k^c,k).
\]
Thus $F$ is generated by the infinite diagonal segment $\{(k^c,k):k\in\Nzero\}$.\qed 

\bigskip
\begin{proposition}[Infinite antichains of \(G\)]\label{prop:infinite-antichains}

The infinite antichains of $G$ are all and only  the subsets $A$ of $G$ of the following form: 

-- $A=\{(k_1, m_1), (k_2, m_2), (k_3, m_3),\dots \}$ where the $k$'s and the $m$'s are strictly increasing sequences in $\N_0$

-- $A=\{(k_1, m_1), (k_2, m_2), (k_3, m_3),\dots\} \cup \{(\infty, \infty)\} $ where the $k$'s and the $m$'s are strictly increasing sequences in $\N_0$
\end{proposition}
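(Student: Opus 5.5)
The approach is to reduce incomparability in $(G,\leq_G)$ to a condition on coordinates. Then show that an antichain containing a point outside $\N_0^2\cup\{(\infty,\infty)\}$ is necessarily finite. The converse direction is a direct check.

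\emph{Step 1 (description of antichains).} Since $\leq_T$ is linear, two points with equal first coordinate are always comparable, because their second coordinates are comparable. The same holds for equal second coordinates. Conversely, $(\alpha,\beta)$ and $(\alpha',\beta')$ with $\alpha<_T\alpha'$ are incomparable exactly when $\beta<_T\beta'$. Hence a subset $A\subseteq G$ is an antichain if and only if its points have pairwise distinct first coordinates, pairwise distinct second coordinates, and
\[
\alpha<_T\alpha' \iff \beta<_T\beta' \qquad\text{for all }(\alpha,\beta),(\alpha',\beta')\in A .
\]
So, listed by increasing first coordinate, an antichain is strictly increasing in both coordinates. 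In particular, every other point of $A$ lies either strictly south-west or strictly north-east of a given point of $A$. This immediately gives the ``if'' direction. Two points $(k_i,m_i),(k_j,m_j)$ with $i<j$ satisfy $k_i<k_j$ and $m_i<m_j$. Moreover, $(k,m)\in\N_0^2$ and $(\infty,\infty)$ satisfy $k<_T\infty$ and $m<_T\infty$, so they are incomparable.

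\emph{Step 2 (points of $E$ and $F$ force finiteness).} Let $A$ be an antichain containing $(k,m^c)\in E$, so that $k\le m$. Any other point of $A$ either has first coordinate $<_T k$, which leaves finitely many possibilities since these are elements of $\N_0$. Or it has first coordinate $>_T k$ and second coordinate $>_T m^c$. The latter means the point is $(\alpha',j^c)$ with $j<m$. It lies in $G$, so $\alpha'$ is finite with $k<\alpha'\le j<m$, which again leaves finitely many points. Hence $A$ is finite, and distinctness of coordinates from Step 1 bounds $|A|$. The case of $F$ is symmetric, with the roles of the coordinates interchanged. A point $(k^c,m)$ with $m\le k$ has only finitely many antichain partners below it, namely those with second coordinate $<m$. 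It also has only finitely many above it, namely those of the form $(j^c,\beta')$ with $j<k$ and $m<\beta'\le j$.

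\emph{Step 3 (the lines at $\infty$).} Suppose $A$ is an infinite antichain containing $(k,\infty)$ with $k\in\N_0$. Points of $A$ south-west of it have first coordinate in $\{0,\dots,k-1\}$, so there are finitely many. Points north-east of it have second coordinate $>_T\infty$, hence lie in $E$, since a cofinite second coordinate forces a finite first coordinate in $G$. By Step 2, no point of $E$ can occur in an infinite antichain. Therefore $A$ would be finite, a contradiction. The point $(\infty,m)$ with $m\in\N_0$ is excluded symmetrically, using $F$.

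Consequently an infinite antichain is contained in $\N_0^2\cup\{(\infty,\infty)\}$. Its points in $\N_0^2$ are infinitely many, and by Step 1 they can be enumerated with strictly increasing $k$'s and $m$'s. This yields exactly the two forms in the statement. The main obstacle is Step 2: one must check carefully, from the explicit description of $G$, that the constraint $k\le m$ (respectively $m\le k$) bounds the region north-east of a point of $E$ (respectively south-west of a point of $F$). The rest is bookkeeping based on Step 1.
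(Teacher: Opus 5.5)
Your proof is correct and follows essentially the same route as the paper's. Both arguments rest on the same three points: the coordinates of an antichain are pairwise distinct (its sections are singletons); a point of $E$ or $F$ confines the remaining coordinates to a finite set, which forces finiteness; and a point on the lines at $\infty$ other than $(\infty,\infty)$ cannot coexist with infinitely many points of $\N_0^2$. Your explicit Step 1 characterization and your check of the ``if'' direction spell out what the paper leaves as elementary.
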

\begin{proof} See the Appendix. \ref{dim prop:infinite-antichains}\end{proof}

While it is natural to refer as \lq\lq corners\rq\rq\,  to minimal points of a SOC subset $S$ of $G$, Example \ref{62} forces the necessity to consider other kinds of corners to describe the shape of a general SOC subset of $G$.

\noindent
Let \(S\) be a nonempty SOC subset of \(G\). A point \((k,\infty)\in G\), with \(k\in\Nb\), that has the two properties
\[
J_{k,\infty}\subseteq S,
\qquad
(k,\infty)\notin S,
\]
is a missing-boundary corner: the region below the missing point is present, but the boundary point itself is not. Analogously, for a point \((\infty,\ell)\in G\), with \(\ell\in\Nb\),  if
\[
K_{\infty,\ell}\subseteq S,
\qquad
(\infty,\ell)\notin S,
\]
we have a symmetric missing-boundary case.
\begin{remark} {\rm If we define
\[
A_2(S)=\{k\in\Nb:J_{k,\infty}\subseteq S,
\ (k,\infty)\notin S\},
\]
 then \(A_2(S)\) has a minimum if it is nonempty.
 Similarly, if we define
\[
A_3(S)=\{\ell\in\Nb:K_{\infty,\ell}\subseteq S,
\ (\infty,\ell)\notin S\},
\]
 then \(A_3(S)\) has a maximum if it is nonempty.\footnote{ The existence of this maximum follows from the fact that
\[
K_{\infty,\infty}=\bigcup_{\ell\in\Nzero}K_{\infty,\ell}.
\]}
 }\end{remark}

\noindent
Let us use the notations $\tau_1(S)$ for the set of minimal elements of $S$, and

\[
k_2(S)=\min A_2(S),\quad
\ell_3(S)=\max A_3(S).
\]
Keeping in mind that $S$ may have or not minimal elements, as well as the sets \(A_i(S)\) may be empty or not, we introduce the corner types definition.

\bigskip
\begin{definition}[Corners]
Let \(S\) be a nonempty SOC subset of \(G\).
The set $\tau(S)$ of the corners of  \(S\)  consists of the minimal elements of $S$, $(k_2(S),\infty)$, and $(\infty,\ell_3(S))$. We distinguish corners as of type 1 in case of minimal elements, of type 2 for $(k_2(S),\infty)$, of type 3 for $(\infty,\ell_3(S))$.

\end{definition}
\noindent
In formula
\[
\tau(S)
=
\tau_1(S)
\cup
\{(k_2(S),\infty):A_2(S)\ne\varnothing\}
\cup
\{(\infty,\ell_3(S)):A_3(S)\ne\varnothing\}.
\]
Each of the three components of $\tau(S)$ can be empty, of course. 
\begin{definition}
Let \(S\) be a nonempty SOC subset of \(G\). We say that a point of $S$ is absorbed by a corner of $S$ when it is contained either in $I_{\alpha,\beta}$, where \((\alpha,\beta)\in \tau_1(S)$, or in
\(J_{k_2(S),\infty}\),  or in
\(K_{\infty,\ell_3(S)}\).
\end{definition}

\noindent
The following is the key for the geometric description of the SOC subsets of $G$,  set forth in Theorem \ref{prop:exact} and Corollary \ref{thm:zigzag}.
\begin{lemma}[Covering lemma]\label{lem:covering}
Let \(S\) be a nonempty SOC subset of \(G\). Then every point of \(S\) is absorbed by some corner.
\end{lemma}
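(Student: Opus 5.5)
Fix a point $p=(\alpha,\beta)\in S$. The plan is to find either a minimal element of $S$ below $p$, or to show that $p$ lies in $J_{k_2(S),\infty}$ or $K_{\infty,\ell_3(S)}$. The natural tool is a descent argument: look at the set $S_p=\{q\in S: q\le_G p\}$ and ask whether it has a minimal element. Since $S$ is SOC, any minimal element of $S_p$ is minimal in $S$, so it is a type-1 corner, and $p\in I_q$. The whole difficulty is therefore to understand when $S_p$ has no minimal element. That happens exactly when there is an infinite strictly descending chain in $S$ below $p$.

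First I would show that every strictly descending chain in $G$ is confined to specific directions. In $T$, strictly descending sequences must eventually lie in $]\infty,0^c]$ and decrease toward $\infty$; strictly increasing sequences must lie in $[0,\infty[$ and increase toward $\infty$. A descending chain $(\alpha_n,\beta_n)$ in $G$ has $\alpha_n$ nonincreasing and $\beta_n$ nondecreasing, with at least one of them moving infinitely often. By Remark \ref{due}, there are three possibilities.
\begin{enumerate}[label=(\alph*)]
\item $\beta_n$ increases through finite values toward $\infty$. Membership in $G$ then forces $\alpha_n$ to be eventually constant, equal to some $k\in\Nb$ (cofinite first coordinates are incompatible with large finite second ones).
\item $\alpha_n$ decreases through values $m^c$ toward $\infty$. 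Then $\beta_n\le m$ must eventually be constant, equal to some $\ell$.
\item Both move. This is impossible, since $(m_n^c,\beta_n)$ requires $\beta_n\le m_n\to\infty$ while $\beta_n$ increases unboundedly, and $m_n$ decreases. So $\beta_n$ is bounded, which contradicts $\beta_n$ moving infinitely often in case (a).
\end{enumerate}

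Next I would translate each case into a missing-boundary corner. In case (a), the points $(k,m_n)\in S$ with $m_n\to\infty$ finite give $I_{k,m_n}\subseteq S$ for all $n$. Their union is $\{(\alpha',\beta'):\alpha'\ge k,\ \beta'<\infty\}=J_{k,\infty}$, so $J_{k,\infty}\subseteq S$. Choosing the chain below $p$ to be maximal-descending (if $(k,\infty)\in S$, the chain would stop there), I would argue that either $(k,\infty)\in S$, which gives a point of $S$ below $p$ from which to restart, or $k\in A_2(S)$. Then $k_2(S)\le k\le\alpha$ and $p\in J_{k_2(S),\infty}$ because $\beta<\infty$. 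Case (b) is symmetric: we get $K_{\infty,\ell}\subseteq S$, using $\bigcup_n I_{m_n^c,\ell}=K_{\infty,\ell}$. If $(\infty,\ell)\notin S$ then $\ell\in A_3(S)$, so $\ell\le\ell_3(S)$ and $p\in K_{\infty,\ell_3(S)}$. Otherwise we descend to $(\infty,\ell)$.

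The main obstacle is organising the descent so that it terminates without using choice-heavy arguments. I would do it by a two-stage minimisation rather than chains. Let $\alpha^*=\inf\{\alpha': \exists\beta',(\alpha',\beta')\in S_p\}$. If this infimum is attained, then among points of $S_p$ with first coordinate $\alpha^*$ take $\beta^*=\sup$ of the second coordinates. If $\beta^*$ is attained, we obtain a minimal element. If it is not attained, then $\beta^*=\infty$ with finite values, which is case (a), and $(\alpha^*,\infty)\notin S$ by the sup failing, so $\alpha^*\in A_2(S)$. If $\alpha^*$ is not attained, then $\alpha^*=\infty$ approached by cofinite values, which is case (b). Here I need to take care to pick $\ell$ as the largest second coordinate that occurs along such points; it is bounded by $\beta$. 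I must also check that $(\infty,\ell)\notin S$, which holds since otherwise $\alpha^*$ would be attained. This stage requires the most care with the ordering conventions, but each step reduces to Remark \ref{due}.
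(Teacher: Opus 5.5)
Your two-stage minimisation is in substance the paper's proof. By super order closedness, the first coordinates occurring in $S_p$ are exactly $S^{\beta}\cap[0,\alpha]$. So your $\alpha^*$ is the paper's $\min S^{\bar\beta}$, your $\beta^*$ is its $\max S_{\alpha_0}$, and your type-1 and type-2 branches are correct. In the type-2 branch the first coordinate is fixed at $\alpha^*$, so nothing there depends on your chain analysis.

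The type-3 branch, however, fails as written. Points $q\le_G p$ have second coordinate $\ge_T\beta$. So the second coordinates in $S_p$ are bounded \emph{below} by $\beta$, not above, and ``the largest second coordinate'' may not exist, or may not work:
\begin{itemize}
\item For $S=F$ and $p=(0^c,0)$ one has $S_p=F$, whose second coordinates exhaust $\mathbb N_0$, so no largest one exists.
\item For $S=\uparrow\{(5^c,5)\}\cup K_{\infty,0}$ and $p=(0^c,0)$, the largest second coordinate in $S_p$ is $5$. Yet $(10^c,5)\in K_{\infty,5}\setminus S$, so $5\notin A_3(S)$.
\end{itemize}
The chain analysis you use to justify a constant $\ell$ has the same direction error. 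Along a descending chain with first coordinates $m_n^c$, the integers $m_n$ \emph{increase}, so $\beta_n\le m_n$ gives no bound. Indeed $(n^c,n)$, $n\in\mathbb N_0$, is a strictly descending chain in $G$ in which both coordinates move. This refutes (c), and also the ``eventually constant'' claims in (a) and (b).

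The repair is to take $\ell=\beta$, exactly as the paper uses $\bar\beta$:
\begin{itemize}
\item If $\alpha^*$ is not attained, then by Remark~\ref{due} the first coordinates of $S_p$ are cofinite and have infimum $\infty$. In particular $\alpha$ is cofinite, hence $\beta\in\mathbb N_0$.
\item Take $(\gamma,\beta')\in G$ with $\gamma>_T\infty$ and $\beta'\le_T\beta$. Pick $(\gamma',\eta)\in S_p$ with $\gamma'<_T\gamma$. Since $\eta\ge_T\beta\ge_T\beta'$, we get $(\gamma',\eta)\le_G(\gamma,\beta')$, hence $(\gamma,\beta')\in S$. Thus $K_{\infty,\beta}\subseteq S$.
\item $(\infty,\beta)\notin S$, since otherwise it would lie in $S_p$ and the infimum $\infty$ would be attained.
\item Hence $\beta\in A_3(S)$, so $\beta\le_T\ell_3(S)$ and $p\in K_{\infty,\beta}\subseteq K_{\infty,\ell_3(S)}$.
\end{itemize}
With this change, and with the chain discussion dropped (the two-stage argument does not need it), your proof is complete.
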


\begin{proof} See the Appendix. \ref{dim lem:covering}\end{proof}


\begin{theorem}[Exact representation by corners]\label{prop:exact}
Let \(S\) be a nonempty SOC subset of \(G\). Then
\[
S=
\up\tau_1(S)
\cup
\begin{cases}
J_{k_2(S),\infty},& if\, A_2(S)\ne\varnothing,\\
\varnothing,& if\,  A_2(S)=\varnothing,
\end{cases}
\cup
\begin{cases}
K_{\infty,\ell_3(S)},& if\, A_3(S)\ne\varnothing,\\
\varnothing,& if\, A_3(S)=\varnothing.
\end{cases}
\]
\end{theorem}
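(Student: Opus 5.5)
The plan is to prove the two inclusions separately. The inclusion ``$\supseteq$'' follows directly from the definitions and the SOC property. The inclusion ``$\subseteq$'' is exactly the Covering Lemma \ref{lem:covering}.

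For ``$\supseteq$'', there are three pieces to check.
\begin{itemize}
\item First, $\up\tau_1(S)\subseteq S$. Every element of $\tau_1(S)$ is a minimal element of $S$, hence lies in $S$. Since $S$ is SOC, it contains every point above such an element.
\item Second, suppose $A_2(S)\neq\varnothing$. Then $k_2(S)=\min A_2(S)$ exists by the Remark preceding the definition of corners, and $k_2(S)\in A_2(S)$. By the very definition of $A_2(S)$ we get $J_{k_2(S),\infty}\subseteq S$.
\item Third, suppose $A_3(S)\neq\varnothing$. Then $\ell_3(S)=\max A_3(S)$ belongs to $A_3(S)$, and so $K_{\infty,\ell_3(S)}\subseteq S$.
\end{itemize}
The only point needing care is that the min and max are actually attained, so that they are elements of the respective sets. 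For $A_2(S)$ this holds because it is a nonempty subset of $\Nb=\Nzero\cup\{\infty\}$, which is well ordered. For $A_3(S)$, attainment is what the footnote identity $K_{\infty,\infty}=\bigcup_{\ell\in\Nzero}K_{\infty,\ell}$ is designed to guarantee. Concretely: if $A_3(S)$ is an infinite subset of $\Nzero$, then every $K_{\infty,\ell}$ with $\ell\in A_3(S)$ lies in $S$, so $F=K_{\infty,\infty}\subseteq S$. Moreover $(\infty,\infty)\notin S$, since otherwise $(\infty,\ell)\ge_G(\infty,\infty)$ would lie in $S$. Hence $\infty\in A_3(S)$ is the maximum. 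I would write this out in one line, taking the Remark as given.

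For ``$\subseteq$'', take any point $x\in S$. By Lemma \ref{lem:covering}, $x$ is absorbed by some corner of $S$. That means one of the following holds:
\begin{itemize}
\item $x\in I_{\alpha,\beta}=\up(\alpha,\beta)$ for some $(\alpha,\beta)\in\tau_1(S)$, so $x\in\up\tau_1(S)$;
\item $x\in J_{k_2(S),\infty}$, which makes sense only when $A_2(S)\neq\varnothing$;
\item $x\in K_{\infty,\ell_3(S)}$, which makes sense only when $A_3(S)\neq\varnothing$.
\end{itemize}
In each case $x$ belongs to the right-hand side, with the case split on whether $A_2(S)$ and $A_3(S)$ are empty matching the displayed case distinction.

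Thus the theorem is essentially a restatement of the Covering Lemma combined with the trivial reverse inclusion. The genuine difficulty, which is already packaged in Lemma \ref{lem:covering}, is handling points not above any minimal element, namely descending chains in the $E$ and $F$ regions and along the lines at $\infty$. Here the only obstacle is the bookkeeping about attainment of $k_2(S)$ and $\ell_3(S)$ noted above.
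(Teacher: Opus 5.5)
Your proposal is correct and follows the paper's proof: you get ``$\supseteq$'' from the SOC property together with the definitions of $A_2(S)$ and $A_3(S)$, and ``$\subseteq$'' directly from Lemma~\ref{lem:covering}. Your explicit check that $k_2(S)$ and $\ell_3(S)$ are attained is a valid filling-in of what the paper leaves to the Remark before the definition of corners. That check uses the well-ordering of $\Nb$, the identity $K_{\infty,\infty}=\bigcup_{\ell\in\Nzero}K_{\infty,\ell}$, and the observation that $(\infty,\infty)\notin S$.
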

\begin{proof}
The inclusion ``\(\supseteq\)'' follows directly from the definitions: type 1 corners belong to \(S\), and their upward closures are contained in \(S\); the  \(J\)- and \(K\)-regions, when present, are contained in \(S\) by definition. The reverse inclusion is nothing else than  Lemma~\ref{lem:covering}.
\end{proof}

\begin{remark}\label{interpretazione1}
 {\rm
We emphasize the interpretation of the above formula. The set \(S\) is the union
of the SOC set generated by the minimal elements of \(S\), if there are any,
together with the \(J\)-region \(J_{k_2(S),\infty}\), if the corner of type 2 exists,
and the \(K\)-region \(K_{\infty,\ell_3(S)}\), if the corner of type 3 exists.
Each of these three components is omitted when the corresponding kind of
corner does not exist.
The two exceptional regions may both arise from the definition of corners of
type 2 and type 3.\footnote{For example, for $S=J_{0, \infty}$ the point $(0,\infty)$ is the  corner of type 2 and the point $(\infty, \infty)$ is the  corner of type 3.}
 However, they need not be retained simultaneously in a nonredundant representation, since every $J$-region contains every $K$-region. Moreover, if a minimal element of \(S\) lies in the North-West
triangle \(E\), then the \(K\)-term is already redundant, even without a
\(J\)-term, because such a minimal element generates the whole relevant
South-East cofinite-finite part. After retaining at most one exceptional region, that region is itself omitted whenever it is contained in $\uparrow\tau_1(S)$. Any exceptional region that remains therefore contributes essentially to the representation.
Theorem \ref{prop:exact} thus identifies precisely how the finite-grid description must be extended. Although a SOC subset of $G$ need not be generated by its minimal elements, every point not generated by them is accounted for by the specified missing-boundary regions.  Together with Theorem \ref{53}, this gives an exhaustive description of the anonymous coalition strategy-proof rules in terms of genuine corners and, when necessary, one essential exceptional region.
}\end{remark}

{\color{black}
In the foregoing, we have started from a SOC subset $S$ of $G$, identified its genuine and missing-boundary corners, and explained how to remove redundant terms from its representation. We now formulate the construction in the opposite direction. We specify an antichain and an optional exceptional region, subject to two admissibility conditions, and use them to generate a SOC subset of $G$. These conditions ensure that the chosen antichain is precisely the set of genuine corners and that the exceptional region, when present, contributes essentially. The resulting correspondence is one-to-one.

\begin{definition}[Admissible parameters]\label{definition of parameters}
Let
$$
\mathcal R=
\{\varnothing\}
\cup
\{J_{k,\infty}:k\in\widehat{\mathbb N}_0\}
\cup
\{K_{\infty,\ell}:\ell\in\widehat{\mathbb N}_0\}.
$$
A pair \((A,R)\), where \(A\subseteq G\) is an antichain and \(R\in\mathcal R\), is called admissible if
$$
A\cap R=\varnothing, \mbox{ \, and either }
R=\varnothing
\quad\text{ or }\quad
R\not\subseteq\uparrow A.
$$
\end{definition}

\noindent
In the above definition the antichain \(A\) is allowed to be empty, with the convention
\(\uparrow\varnothing=\varnothing\). 
Denote by \(\mathcal D\) the family of admissible pairs. For every
\((A,R)\in\mathcal D\), define

$$
S_{A,R}=\big(\uparrow A\big)\cup R.
$$

\begin{proposition}[Nonredundant parametrization]\label{use of parameters}

 The assignment

$$
(A,R)\longmapsto S_{A,R}
$$
is a bijection between \(\mathcal D\) and the family of all SOC subsets of \(G\). Moreover, \(A\) is exactly the set of minimal elements of \(S_{A,R}\). The empty SOC set corresponds to the pair
\((\varnothing,\varnothing)\).
\end{proposition}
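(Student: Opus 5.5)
The plan has four steps: well-definedness, identification of the minimal elements, injectivity, and surjectivity.

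\emph{Well-definedness.} $S_{A,R}$ is SOC because $\uparrow A$ is SOC and each member of $\mathcal R$ is SOC. The latter is a direct check from Definition \ref{61}. For $J_{k,\infty}$, raising $\alpha$ and lowering $\beta$ preserves both $\alpha\ge_T k$ and $\beta<_T\infty$. For $K_{\infty,\ell}$, the conditions $\alpha>_T\infty$ and $\beta\le_T\ell$ are likewise upward stable.

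\emph{Minimal elements of $S_{A,R}$ are exactly $A$.} The key auxiliary fact is that $J_{k,\infty}$ and $K_{\infty,\ell}$ have no minimal elements. Take $(\alpha,\beta)\in J_{k,\infty}$; then $\beta$ is finite. Since $\beta<_T\infty$, the point $(\alpha,\beta+1)$ lies in $G$ (one checks this against the description of the range of $\rchi_w$), lies in $J_{k,\infty}$, and is strictly below $(\alpha,\beta)$. For $(\alpha,\beta)\in K_{\infty,\ell}$ we have $\alpha=k^c$ and $\beta=m\le k$. If $m<\min(k,\ell)$ we may raise $m$; in any case $((k+1)^c,m)$ lies in $K_{\infty,\ell}$ and is strictly below $(\alpha,\beta)$.

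Now let $x\in S_{A,R}$ be minimal. If $x\in R\setminus\uparrow A$, the strictly smaller point produced above is still in $R\subseteq S_{A,R}$, which is impossible. Hence $x\in\uparrow A$, so $x\ge a$ for some $a\in A$, and minimality forces $x=a$.

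Conversely, let $a\in A$ and suppose $y<a$ with $y\in S_{A,R}$. Then $y\notin\uparrow A$, because $A$ is an antichain. So $y\in R$, and since $R$ is SOC, $a\in R$, contradicting $A\cap R=\varnothing$. Thus $\tau_1(S_{A,R})=A$.

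\emph{Injectivity.} Suppose $S_{A,R}=S_{A',R'}$. The previous step gives $A=A'$. If $R=\varnothing$ and $R'\neq\varnothing$, then $R'\subseteq\uparrow A$, which violates admissibility. If both are nonempty, I would recover $R$ from $S\setminus\uparrow A$ by a case analysis.

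\begin{enumerate}
\item If $R$ is a $J$-region and $R'$ a $K$-region, then $R\setminus\uparrow A$ contains a point with $\alpha\le_T\infty$, which $R'$ cannot supply.
\item Two $J$-regions $J_{k,\infty}$ and $J_{k',\infty}$ with $k<k'$ are handled as follows. Since $A$ is an antichain in $G$, its elements on row $\beta=m$ for large finite $m$ are eventually absent or bounded, so for $m$ large the point $(k,m)$ escapes $\uparrow A$ unless $J_{k,\infty}\subseteq\uparrow A$. This uses Proposition \ref{prop:infinite-antichains}: finitely many points of $A$ lie at each abscissa $\le k$.
\item Two $K$-regions are handled by the analogous argument on the column side.
\end{enumerate}

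This case analysis is the delicate step. I expect the main obstacle to be showing that whenever $R\neq R'$ are of the same kind, the region difference is not swallowed by $\uparrow A$. I would first try the uniform statement that $J_{k,\infty}\subseteq\uparrow A\cup J_{k',\infty}$ with $k<k'$ forces $J_{k,\infty}\subseteq\uparrow A$. The idea is that points $(k,m)$ with $m$ finite but arbitrarily large must be covered by elements $(k_i,m_i)\in A$ with $k_i\le k$ and $m_i\ge m$, and by the antichain structure these cover all of $J_{k,\infty}$.

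\emph{Surjectivity.} Let $S$ be a nonempty SOC set. Set $A=\tau_1(S)$, an antichain. Then choose $R$ among $J_{k_2(S),\infty}$, $K_{\infty,\ell_3(S)}$, and $\varnothing$ following Remark \ref{interpretazione1}:
\begin{enumerate}
\item keep $J_{k_2(S),\infty}$ if it exists, since it contains every $K$-region;
\item otherwise take $K_{\infty,\ell_3(S)}$;
\item in either case replace $R$ by $\varnothing$ when $R\subseteq\uparrow A$.
\end{enumerate}
Theorem \ref{prop:exact} then gives $S=\uparrow A\cup R$. The admissibility condition $R\not\subseteq\uparrow A$ holds by construction. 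The condition $A\cap R=\varnothing$ holds because $R$ has no minimal elements while every point of $A$ is minimal in $S\supseteq R$. The empty set corresponds to $(\varnothing,\varnothing)$, and by the minimal-element step no other pair yields it.
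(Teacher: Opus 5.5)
Your plan follows the paper's proof step for step: SOC-ness of $S_{A,R}$, absence of minimal elements in the nonempty members of $\mathcal R$ (giving $\tau_1(S_{A,R})=A$), injectivity by comparing the exceptional regions, and surjectivity from Theorem \ref{prop:exact} with the pruning of Remark \ref{interpretazione1}. The minimal-element and surjectivity steps are correct as written. One auxiliary claim is false, however: for $(\alpha,\beta)\in J_{k,\infty}$ the point $(\alpha,\beta+1)$ need not lie in $G$. For example, $(0^c,0)\in J_{k,\infty}$ for every $k\in\Nb$, while $(0^c,1)\notin G$; in general $(n^c,n+1)\notin G$. The vertical move is valid only for $\alpha\in\Nb$. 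When $\alpha=n^c$ you must instead move horizontally to $((n+1)^c,\beta)$, exactly as you did for $K_{\infty,\ell}$.

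The injectivity case analysis, which you leave as a plan, closes with one observation that you approach but never state. For $k\in\Nb$ and $q\in\N_0$,
\[
J_{k,\infty}=\uparrow\{(k,m):m\in\N_0\},\qquad K_{\infty,q}=\uparrow\{(p^c,q):p\in\N_0,\ p\ge q\}.
\]
Suppose $R=J_{k,\infty}$ and $R'$ is either a $K$-region or $J_{k',\infty}$ with $k<k'$. Then every $(k,m)$, $m\in\N_0$, lies in $R\setminus R'$, hence (since $\uparrow A\cup R=\uparrow A\cup R'$) in $\uparrow A$. So $R\subseteq\uparrow A$, contradicting admissibility. This settles your cases 1 and 2 at once. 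Your unproved assertion in case 1, that $R\setminus\uparrow A$ contains a point with $\alpha\le_T\infty$, is exactly this identity. The detour through Proposition \ref{prop:infinite-antichains} in case 2 is unnecessary: what makes those points cover $J_{k,\infty}$ is that $\uparrow A$ is an upper set, not the antichain structure. For two $K$-regions $K_{\infty,\ell}$, $K_{\infty,\ell'}$ with $\ell<\ell'$ (so $\ell$ is finite), your ``analogous'' argument must use a row rather than a column. The points $(p^c,\ell+1)$, $p\ge\ell+1$, lie in $K_{\infty,\ell'}\setminus K_{\infty,\ell}$, hence in $\uparrow A$. They generate $K_{\infty,\ell+1}\supseteq K_{\infty,\ell}$, again contradicting admissibility. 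With these two repairs your argument coincides with the paper's proof.
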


\begin{proof}
See the Appendix \ref{dim use of parameters}
\end{proof}
}
{\color{black}
At this point, we can express the anonymous coalition strategy-proof rules directly in a more explicit parametric form. The validity of the parametrization requires no new proof, as it is merely a restatement of Definition \ref{definition of parameters} and Proposition \ref{use of parameters}, in the light of Theorem \ref{53}.
\noindent
First, let us introduce the chain ${\widehat T}$
\[
\star<_{\widehat T} 0<_{\widehat T} 1<_{\widehat T} 2<_{\widehat T}\cdots
<_{\widehat T}\infty_K<_{\widehat T}\infty_J
<_{\widehat T}\cdots<_{\widehat T} 2^c<_{\widehat T} 1^c<_{\widehat T} 0^c
\]
obtained from $T$ by splitting $\infty$ into two adjacent
symbols $\infty_K,\infty_J$ and adjoining the symbol $\star$ as a new minimum.
We use its elements to index the missing-boundary regions by defining, 
for $r\in{\widehat T}$,
\[
R_r=
\begin{cases}
\varnothing, & r=\star,\\
K_{\infty,\ell}, & r=\ell\in\mathbb N_0,\\
K_{\infty,\infty}, & r=\infty_K,\\
J_{\infty,\infty}, & r=\infty_J,\\
J_{k,\infty}, & r=k^c,\quad k\in\mathbb N_0.
\end{cases}
\]
The order on ${\widehat T}$ records inclusion of the exceptional regions:
\[
r<_{\widehat T} r'
\quad\Longleftrightarrow\quad R_r\subset R_{r'}.
\]
An anonymous CSP scf can be specified by a parameter configuration $$\Gamma  = \langle \{(\alpha_1, \beta_1), (\alpha_2, \beta_2), \dots\}; r\rangle$$
 provided that  $(\alpha_1, \beta_1),  (\alpha_2, \beta_2), \dots$ form an antichain in $G$ (empty, finite or infinite) and, for $r\in\widehat T$, either $r=\star$ or the following two conditions hold: 
 
 $
R_r\cap \{(\alpha_1, \beta_1), (\alpha_2, \beta_2), \dots\}=\varnothing,
\qquad$ 
 and $\quad R_r\not\subseteq\uparrow \{(\alpha_1, \beta_1), (\alpha_2, \beta_2), \dots\}.
$

\noindent
The resulting one-to-one parametrization  is
\qquad
$\Gamma\mapsto\phi_{\Gamma}$ \qquad where $\phi_{\Gamma}$ is the canonical scf associated to the SOC subset of $G$ defined as
$R_r\cup\uparrow \{(\alpha_1, \beta_1), (\alpha_2, \beta_2), \dots\}.
$
{\color{black}
The parameters have a direct threshold interpretation. If \(\chi_w(P)=(\alpha,\beta)\), each genuine corner \((\alpha_i,\beta_i)\) specifies a lower threshold \(\alpha_i\) for the strict-support signature and an upper threshold \(\beta_i\) for the strict-opposition signature. The corresponding condition is

$$
\alpha\ge_T\alpha_i
\qquad\text{and}\qquad
\beta\le_T\beta_i.
$$

When both thresholds are finite, this means that at least \(\alpha_i\) voters strictly support \(a\) and at most \(\beta_i\) strictly oppose it. The rule chooses \(a\) exactly when at least one of these threshold pairs is met or \(\chi_w(P)\in R_r\).

The index \(r\) specifies the optional additional condition. For \(r=\star\), the genuine-corner conditions alone determine the choice of \(a\). A finite index \(r=\ell\) adds the sufficient condition that all but finitely many voters strictly support \(a\) and at most \(\ell\) strictly oppose it. At \(r=\infty_K\), support remains cofinite, while finite opposition is allowed without a common finite ceiling. At \(r=\infty_J\), the additional condition requires infinitely many strict supporters and only finitely many strict opponents; the indifferent class may now be infinite. Finally, \(r=k^c\) specifies a minimum of \(k\) strict supporters together with finite strict opposition.

The order on \(\widehat T\) consequently has a choice interpretation. Holding the genuine-corner antichain fixed, increasing \(r\) between two admissible configurations strictly enlarges the set of profiles at which \(a\) is chosen. Along the finite branch, this relaxes the ceiling on opposition while retaining cofinite support; the passage from \(\infty_K\) to \(\infty_J\) permits infinitely many indifferent voters; and along the branch indexed by \(k^c\), it lowers the minimum support requirement while retaining finite opposition. Inclusion follows from the nesting of the regions \(R_r\), and strictness follows from the nonredundancy of the parametrization.

}
}

\subsection{The four geometric families and prototypical figures}\label{four geometries}

We now describe the possible shapes of a nonempty SOC subset \(S\subseteq G\) in terms of its genuine corners \(\tau_1(S)\) and the two possible missing-boundary regions.
The empty SOC set is to be added separately; in the social choice interpretation it corresponds to the constant rule choosing \(b\).

\begin{corollary}[Generalized zigzag description of SOC subsets of \(G\)]\label{thm:zigzag}
A nonempty SOC subset $S$ of \(G\) has one of the following four forms.

\begin{enumerate}
\item[\emph{I.}] 
If the set \(\tau_1(S)\) is infinite, then 
$S=\up \tau_1(S).
$
\item[\emph{II.}] 
If the set \(\tau_1(S)\) is finite and
$\,
\tau_1(S)\subseteq\Nb^2,$\,
then \,
$
S=\up \tau_1(S),
\quad
S=\up \tau_1(S)\cup J_{k,\infty},
\quad
\hbox{or}
\quad
S=\up \tau_1(S)\cup K_{\infty,\ell},
$
for suitable \(k,\ell\in\Nb\).

\item[\emph{III.}] If the set \(\tau_1(S)\) is finite and
$
\tau_1(S)\cap E\ne\varnothing,
$
then
$
S=\up \tau_1(S)
\quad
\hbox{or}
\quad
S=\up \tau_1(S)\cup J_{k,\infty}
$
for some \(k\in\Nb\). 

\item[\emph{IV.}] If the set \(\tau_1(S)\) is finite and
$
\tau_1(S)\cap F\ne\varnothing,
$
then
$
S=\up \tau_1(S)
\quad
\hbox{or}
\quad
S=\up \tau_1(S)\cup K_{\infty,\ell}
$
for some \(\ell\in\Nb\). 
\end{enumerate}
\end{corollary}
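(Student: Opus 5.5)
The plan is to derive the corollary from Theorem \ref{prop:exact}, together with Proposition \ref{prop:infinite-antichains} and the redundancy observations of Remark \ref{interpretazione1}. By Theorem \ref{prop:exact}, $S=\up\tau_1(S)\cup R_2\cup R_3$, where $R_2\in\{\varnothing,J_{k_2(S),\infty}\}$ and $R_3\in\{\varnothing,K_{\infty,\ell_3(S)}\}$. Two simplifications reduce this to the listed forms. First, since $K_{\infty,\ell}\subseteq J_{k,\infty}$ for all $k,\ell$, when both regions occur the $K$-term may be dropped. Hence $S$ is always $\up\tau_1(S)$, or $\up\tau_1(S)\cup J_{k,\infty}$, or $\up\tau_1(S)\cup K_{\infty,\ell}$. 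Second, $\tau_1(S)$ is an antichain, so it is either finite or one of the two infinite shapes described in Proposition \ref{prop:infinite-antichains}. The case split in the corollary follows this dichotomy.

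\emph{Case I.} Suppose $\tau_1(S)$ is infinite. Then it contains minimal points $(k_n,m_n)$ with $k_n,m_n$ strictly increasing in $\N_0$. I will show that every $J$- and $K$-region contained in $S$ is already contained in $\up\tau_1(S)$.

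For $K_{\infty,\ell}$, take $(k^c,m)\in K_{\infty,\ell}$, so $m\le\ell$ and $m\le k$ finite. Choose $n$ with $m_n\ge m$. Then $(k^c,m)\ge_G(k_n,m_n)$, because $k^c>_T k_n$ and $m\le m_n$.

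For $J_{k,\infty}$, a point $(\alpha,\beta)$ has $\beta$ finite and $\alpha\ge_T k$. If $\alpha$ is cofinite or $\infty$, then $\alpha\ge_T k_n$ for every $n$, and choosing $n$ with $m_n\ge\beta$ gives $(\alpha,\beta)\ge_G(k_n,m_n)$. If $\alpha$ is finite, this argument fails. Here I would use that $S$ is SOC and contains $J_{k,\infty}$: the point $(\alpha,\beta)$ then lies in $S$, and by minimality arguments it dominates some minimal element of $S$. This is the delicate step. The cleanest route is to check in the covering lemma's proof that a point of $S\cap\Nb^2$ with finite coordinates lies above a minimal element. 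This holds because below such a point the relevant set is finite (coordinates in $\{0,\dots,\alpha\}\times\{\beta,\dots\}$ is not finite), so I would argue instead via well-foundedness. Indeed, descending chains from $(\alpha,\beta)$ with finite $\alpha$ must stabilise in the first coordinate. In the second coordinate they move up through $\N_0$ and then possibly to $\infty$ and cofinite values, which lie outside $J$ but could still lie in $S$. Since finitely many abscissae are involved, take for each abscissa $\alpha'\le\alpha$ the maximal ordinate in $S$. By Remark \ref{due} this exists unless the supremum is a non-attained $\infty$.

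I expect this to be the main obstacle: showing that in Case I no extra region survives. I would handle it by combining the fact that $\tau_1(S)$ has unbounded finite ordinates with the covering lemma applied to points of $S$ with finite abscissa.

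\emph{Cases II--IV.} Suppose $\tau_1(S)$ is finite. Since $E$ and $F$ are incomparable-position triangles, an antichain cannot meet both. Indeed, any $(k,m^c)\in E$ and $(j^c,n)\in F$ satisfy $(j^c,n)\ge_G(k,m^c)$, so they are comparable. Therefore exactly one of three situations holds: $\tau_1(S)\subseteq\Nb^2$, or $\tau_1(S)$ meets $E$, or $\tau_1(S)$ meets $F$.

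In Case II all three forms remain possible, which gives the claim directly.

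In Case III, a minimal point $(k,m^c)\in E$ dominates-from-below all of $F$, so $F\subseteq\up\tau_1(S)$ and the $K$-term is redundant. This leaves the two listed forms.

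In Case IV, a point $(j^c,n)\in F$ lies in $\up\tau_1(S)$ while also belonging to $J_{k,\infty}$ for every $k$. I need that $J_{k,\infty}\subseteq S$ is then impossible, or redundant, or replaceable by a $K$-term. Here I would argue as follows. If $A_2(S)\neq\varnothing$, then the points $(\alpha,\beta)$ with $\alpha\ge_T k$ and finite $\beta$ lie in $S$. Such points with $\alpha$ finite or $\infty$ must then be covered. Using the covering lemma, they cannot be covered by a type-1 corner in $F$, since a point of $F$ has a cofinite abscissa. So they would be covered by finitely many $\Nb^2$-corners with bounded finite ordinate, or by corners of the form $(\cdot,\infty)$. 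I would derive a contradiction with unbounded $\beta$ unless some minimal corner has ordinate $\infty$, and in that case show the $J$-term is absorbed. I expect Case IV to require this careful covering check.
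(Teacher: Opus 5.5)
Your reduction to at most one exceptional region is correct, and so are the following steps: the absorption of $K$-regions in Case I, the treatment of points of $J_{k,\infty}$ with abscissa $\infty$ or cofinite, the $E$/$F$ dichotomy, and Cases II and III. All of these agree with the paper. The gap is that the two steps you flag as obstacles remain open, and the routes you sketch cannot close them.

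In Case I you must dispose of $J_{k,\infty}\subseteq S$ with $k$ finite. Well-foundedness does not work, because below a finite--finite point the ordinate can increase without bound. In $S=J_{0,\infty}$ there are no minimal elements at all, so no point of $S$ lies above one. The covering lemma does not work either: a point of $J_{k,\infty}$ may simply be absorbed by the type-2 corner $(k_2(S),\infty)$. The same circularity defeats your Case IV plan. The points of $J_{k,\infty}$ with finite ordinate are absorbed by the $J$-region itself, so no contradiction from ``unbounded $\beta$'' can be extracted. Moreover, in Case IV no minimal element of ordinate $\infty$ can exist, since it would be comparable with the corner lying in $F$.

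The missing idea is that a $J$- or $K$-region contained in $S$ contains no minimal element of $S$. Each of its points has a strictly smaller point inside the region:
\begin{itemize}
\item if $\beta\in\Nzero$ and $\alpha\in\Nb$, then $(\alpha,\beta+1)<_G(\alpha,\beta)$;
\item if the point is $(p^c,q)$, then $((p+1)^c,q)<_G(p^c,q)$, which is feasible because $q\le p$.
\end{itemize}

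In Case I, suppose $J_{k,\infty}\subseteq S$ with $k$ finite, and choose $n$ with $k_n\ge k$. The corner $(k_n,m_n)$ then lies in $J_{k,\infty}$, so $(k_n,m_n+1)\in S$ lies strictly below it, contradicting minimality. Hence only $J_{\infty,\infty}$ can occur, and your own computation places it inside $\up\tau_1(S)$.

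In Case IV you already noticed that the corner $(p^c,q)\in\tau_1(S)\cap F$ belongs to every $J_{k,\infty}$. The conclusion follows at once: $((p+1)^c,q)\in J_{k,\infty}\subseteq S$ lies strictly below $(p^c,q)$, so no $J$-term can be present.

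These are essentially the paper's arguments for Cases I and IV.
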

\begin{proof} See the Appendix. \ref{dim thm:zigzag} \end{proof}
Thus, if the set \(\tau_1(S)\) is infinite, we are dealing with a genuine-corners antichain that can only be
$\{(k_i,m_i):i\in\N\}$ or $\{(k_i,m_i):i\in\N\}\cup\{(\infty,\infty)\},
$
where \(k_i,m_i\in\Nzero\) and both sequences are strictly increasing. Geometrically, $S$ 
looks like in Figure 3. Namely as a \lq\lq triangle\rq\rq\, that may include or not the vertex $(\infty,\infty)$ depending on the fact that $(\infty,\infty) \in$ or $\notin \tau_1(S)$.  The other two vertices of the \lq\lq triangle\rq\rq\,  are  $(k_1,0)$ and $(0^c, 0)$. The left side of the triangle is a vertical-horizontal zigzag line with the corners $(k_i, m_i)$. In case there are finitely many, even zero, minimal points this gives the remaining cases. When the genuine-corner antichain is contained in the central square,  if both $A_2(S)$ and $A_3(S)$ are nonempty, since every $K$-region is contained in every $J$-region, the \(K_{\infty, \ell_3(S)}\)-term  may be omitted. Figures 4 and 5 illustrate the cases. Figure 6 illustrates when  the finite genuine-corner antichain meets the north-west triangle. In this case a possible \(K\)-term is redundant, because any point of \(\tau_1(S)\cap E\) already generates every region \(K_{\infty,\ell}\). Finally the case IV corresponds to Figure 7: at least a minimal point of $S$ lies in south-east triangle, what determines that no  \(J\)-term can occur: the presence of a \(J\)-region would destroy the minimality  of any point of \(\tau_1(S)\cap F\).


$$\begin{tikzpicture}[xscale=0.35, yscale=0.28]
\draw [<->] (45,0) -- (0,0) -- (0,45);
\draw [dashed, ultra thin](19.8,20)  -- (0,20);
\draw [fill=white, ultra thick,white ] (40,40) -- (40,0) -- (0,40) -- (40,40);
\node at (20,-1) {$\infty$}; \node at (-1,20) {$\infty$}; \node at (10,-1){...}; \node at (18.2,-1){...}; \node at (30,-1){...}; \node at (22.3,-1){...};
\draw  (40,0) -- (0,40);
\fill[gray!20] (1,0)-- (1,3)--(3,3)--(3,5)--(8,5)--(8,9)--(10,9)--(10,10)--(14,10)--(14,14)--(15,14)--(15,17)--(16,17)--(16,18)--(17,18)--(17,19)--(18,19)--(18, 19.5)
  -- (20,20) -- (40,0) -- cycle;
\draw  (40,0) -- (0,40);
{\footnotesize 
\node at (0,-1) {$0$};
\node at (1,-1) {$1$};
\node at (2,-1) {$2$};\node at (3,-1) {$3$}; \node at (4,-1) {$4$};
\node at (40,-1) {$0^c$};
\node at (39,-1) {$1^c$};\node at (38,-1) {$2^c$}; \node at (37,-1) {$3^c$};\node at (36,-1) {$4^c$};
\node at (-1,0) {$0$}; \node at (-1,1) {$1$}; \node at (-1,2) {$2$}; \node at (-1,3) {$3$};\node at (-1,4) {$4$};
\node at (-1,40) {$0^c$};\node at (-1,39) {$1^c$}; \node at (-1,38) {$2^c$};\node at (-1,37) {$3^c$};\node at (-1,36) {$4^c$}; \node at (-1,10) {$\vdots$}; \node at (-1,18) {$\vdots$}; \node at (-1,22) {$\vdots$}; \node at (-1,30) {$\vdots$};
\draw (1,0)-- (1,3)--(3,3)--(3,5)--(8,5)--(8,9)--(10,9)--(10,10)--(14,10)--(14,14)--(15,14)--(15,17)--(16,17)--(16,18)--(17,18)--(17,19)--(18,19)--(18, 19.5); 
\draw [->](18.5,18.5)-- (19.5,19.5);
\node at (1,3){$\star$};\node at (3,5){$\star$};\node at (8,9){$\star$};\node at (1,3){$\star$};\node at (10,10){$\star$};\node at (14,14){$\star$};\node at (15,17){$\star$};\node at (16,18){$\star$};\node at (17,19){$\star$};
\node at (22.3,20) {$(\infty, \infty)$};
\draw[thick] (20,20) circle [radius=0.27];
\draw [dashed, ultra thin](20,0) -- (20,19.8);

}
\node at (20,-3) {Figure 3: Shape form I prototype;  $S=\uparrow \tau_1(S)$, with $(\infty,\infty)\notin S $};

\node at (44,-2) {$T$}; \node at (-2,44) {$T$};
\end{tikzpicture}
$$
{\color{black}
The alternative case $(\infty,\infty)\in S $  is obtained by including $(\infty,\infty)$ as an additional genuine corner, and all other grid-point memberships remain unchanged. 
Writing the genuine corners of the depicted set as $(k_i,m_i), i\in\mathbb N$, the rule chooses $a$ if and only if some threshold pair is met: at least $k_i$ voters strictly support $a$ and at most $m_i$ strictly oppose it. Including the vertex $(\infty,\infty)$ additionally selects $a$ at every profile where infinitely many voters strictly prefer each alternative, regardless of how many voters are indifferent. All other outcomes remain unchanged.}


\begin{center}
\begin{tikzpicture}[xscale=0.35, yscale=0.28, line cap=round, line join=round]


\draw[<->] (45,0) -- (0,0) -- (0,45);
\draw (40,0) -- (0,40);

\fill[white] (40,40) -- (40,0) -- (0,40) -- cycle;
\draw (40,0) -- (0,40);

\fill[gray!20] (3,0) -- (3,3) -- (9,3) -- (9,7) -- (11,7) -- (11,13)
  -- (27,13) -- (40,0) -- cycle;
\fill[gray!20] (16,0) -- (16,20) -- (20,20) -- (40,0) -- cycle;

\draw[<->] (45,0) -- (0,0) -- (0,45);
\draw (40,0) -- (0,40);

\draw[very thick] (3,0) -- (3,3) -- (9,3) -- (9,7) -- (11,7) -- (11,13) -- (16,13);

\draw[very thick,dashed] (16,20) -- (20,20);
\draw[very thick] (16,13) -- (16,20);

\node[scale=1.05] at (3,3) {$\star$};
\node[scale=1.05] at (9,7) {$\star$};
\node[scale=1.05] at (11,13) {$\star$};

\draw[thick,fill=white] (16,20) circle[radius=0.32];
\node[anchor=south east] at (15.3,20.7) {$(16,\infty)$};



{\footnotesize
\node at (0,-1) {$0$};
\node at (1,-1) {$1$};
\node at (2,-1) {$2$};
\node at (3,-1) {$3$};
\node at (4,-1) {$4$};
\node at (9,-1) {$9$};
\node at (11,-1) {$11$};
\node at (16,-1) {$16$};
\node at (20,-1) {$\infty$};
\node at (40,-1) {$0^c$};
\node at (39,-1) {$1^c$};
\node at (38,-1) {$2^c$};
\node at (37,-1) {$3^c$};
\node at (36,-1) {$4^c$};
\node at (10,-1) {$\cdots$};
\node at (18.2,-1) {$\cdots$};
\node at (30,-1) {$\cdots$};

\node at (-1,0) {$0$};
\node at (-1,1) {$1$};
\node at (-1,2) {$2$};
\node at (-1,3) {$3$};
\node at (-1,4) {$4$};
\node at (-1,7) {$7$};
\node at (-1,13) {$13$};
\node at (-1,16) {$16$};
\node at (-1,20) {$\infty$};
\node at (-1,40) {$0^c$};
\node at (-1,39) {$1^c$};
\node at (-1,38) {$2^c$};
\node at (-1,37) {$3^c$};
\node at (-1,36) {$4^c$};
\node at (-1,10) {$\vdots$};
\node at (-1,18) {$\vdots$};
\node at (-1,30) {$\vdots$};
}

\draw[dotted]  (16,20) -- (0,20);
\draw[dotted] (20,0)  -- (20,20);

\node at (44,-2) {$T$};
\node at (-2,44) {$T$};
\node at (20,-4.2) {Figure 4: Shape form II prototype; $S=\uparrow\{(3,3),(9,7),(11,13)\}\cup J_{16,\infty}$};

\end{tikzpicture}
\end{center}

{\color{black}
Here $a$ wins under any of the three support-and-opposition conditions discussed after Definition \ref{61}, or whenever at least 16 voters support $a$ and only finitely many oppose it. The open point marks the excluded boundary at which opposition becomes infinite.}

\begin{center}
\begin{tikzpicture}[xscale=0.35, yscale=0.28, line cap=round, line join=round]

\draw[<->] (45,0) -- (0,0) -- (0,45);
\draw (40,0) -- (0,40);

\fill[white] (40,40) -- (40,0) -- (0,40) -- cycle;
\draw (40,0) -- (0,40);

\fill[gray!20] (3,0) -- (3,3) -- (9,3) -- (9,7) -- (11,7) -- (11,13)
  -- (27,13) -- (40,0) -- cycle;
\fill[gray!20] (20,0) -- (20,18) -- (22,18) -- (40,0) -- cycle;

\draw[<->] (45,0) -- (0,0) -- (0,45);
\draw (40,0) -- (0,40);

\draw[very thick] (3,0) -- (3,3) -- (9,3) -- (9,7) -- (11,7) -- (11,13) -- (20,13);

\draw[very thick,dashed] (20,13) -- (20,18);
\draw[very thick] (20,18) -- (22,18);

\node[scale=1.05] at (3,3) {$\star$};
\node[scale=1.05] at (9,7) {$\star$};
\node[scale=1.05] at (11,13) {$\star$};

\draw[thick,fill=white] (20,18) circle[radius=0.32];
\node[anchor=west] at (15.5,18.8) {$(\infty,18)$};


{\footnotesize
\node at (0,-1) {$0$};
\node at (1,-1) {$1$};
\node at (2,-1) {$2$};
\node at (3,-1) {$3$};
\node at (4,-1) {$4$};
\node at (9,-1) {$9$};
\node at (11,-1) {$11$};
\node at (20,-1) {$\infty$};
\node at (40,-1) {$0^c$};
\node at (39,-1) {$1^c$};
\node at (38,-1) {$2^c$};
\node at (37,-1) {$3^c$};
\node at (36,-1) {$4^c$};
\node at (10,-1) {$\cdots$};
\node at (18.2,-1) {$\cdots$};
\node at (30,-1) {$\cdots$};
\node at (22.3,-1) {$\cdots$};

\node at (-1,0) {$0$};
\node at (-1,1) {$1$};
\node at (-1,2) {$2$};
\node at (-1,3) {$3$};
\node at (-1,4) {$4$};
\node at (-1,7) {$7$};
\node at (-1,13) {$13$};
\node at (-1,18) {$18$};
\node at (-1,20) {$\infty$};
\node at (-1,40) {$0^c$};
\node at (-1,39) {$1^c$};
\node at (-1,38) {$2^c$};
\node at (-1,37) {$3^c$};
\node at (-1,36) {$4^c$};
\node at (-1,10) {$\vdots$};
\node at (-1,16) {$\vdots$};
\node at (-1,22) {$\vdots$};
\node at (-1,30) {$\vdots$};
}

\draw[dotted] (20,0) -- (20,13);\draw[dotted] (20,18)--(20,20)--(0,20);

\node at (44,-2) {$T$};
\node at (-2,44) {$T$};
\node at (20,-4.2) {Figure 5: Shape form II prototype; $S=\uparrow\{(3,3),(9,7),(11,13)\} \cup K_{\infty,18}$};

\end{tikzpicture}
\end{center}

{\color{black}In Figure 5, $a$ is chosen if and only if one of the first three conditions described for Figure 4 holds, or all but finitely many voters support $a$ and at most 18 oppose it. This additional condition, represented by $K_{\infty,18}$, requires the indifferent voters to be finite as well. For example, a profile with infinitely many supporters, 18 opponents, and infinitely many indifferent voters yields $a$ under the rule in Figure 4 but $b$ under the rule in Figure 5.
}

\begin{center}
\begin{tikzpicture}[xscale=0.35, yscale=0.28, line cap=round, line join=round]

%

\draw[<->] (45,0) -- (0,0) -- (0,45);
\draw (40,0) -- (0,40);

\fill[white] (40,40) -- (40,0) -- (0,40) -- cycle;
\draw (40,0) -- (0,40);

\fill[gray!20]
  (3,0) -- (3,3) -- (9,3) -- (9,7) -- (10,7) -- (10,20)
  -- (14,20) -- (14,23) -- (17,23) -- (40,0) -- cycle;

\draw[<->] (45,0) -- (0,0) -- (0,45);
\draw (40,0) -- (0,40);

\draw[very thick]
  (3,0) -- (3,3) -- (9,3) -- (9,7) -- (10,7) -- (10,20);
\draw[very thick] (14,20) -- (14,23) -- (17,23) -- (40,0);

\draw[very thick,dashed] (10,20) -- (14,20);

\node[scale=1.05] at (3,3) {$\star$};
\node[scale=1.05] at (9,7) {$\star$};
\node[scale=1.05] at (14,23) {$\star$};

\draw[thick,fill=white] (10,20) circle[radius=0.32];
\node[anchor=south east] at (9.4,20.7) {$(10,\infty)$};



{\footnotesize
\node at (0,-1) {$0$};
\node at (1,-1) {$1$};
\node at (2,-1) {$2$};
\node at (3,-1) {$3$};
\node at (4,-1) {$4$};
\node at (9,-1) {$9$};
\node at (10,-1) {$10$};
\node at (17,-1) {$17$};
\node at (14,-1) {$14$};

\node at (20,-1) {$\infty$};
\node at (40,-1) {$0^c$};
\node at (39,-1) {$1^c$};
\node at (38,-1) {$2^c$};
\node at (37,-1) {$3^c$};
\node at (36,-1) {$4^c$};
\node at (23,-1.15) {$17^c$};
\node at (12.8,-1) {$\cdots$};
\node at (18.2,-1) {$\cdots$};
\node at (30,-1) {$\cdots$};

\node at (-1,0) {$0$};
\node at (-1,1) {$1$};
\node at (-1,2) {$2$};
\node at (-1,3) {$3$};
\node at (-1,4) {$4$};
\node at (-1,7) {$7$};
\node at (-1,10) {$10$};
\node at (-1,20) {$\infty$};
\node at (-1,23) {$17^c$};
\node at (-1,40) {$0^c$};
\node at (-1,39) {$1^c$};
\node at (-1,38) {$2^c$};
\node at (-1,37) {$3^c$};
\node at (-1,36) {$4^c$};
\node at (-1,13.5) {$\vdots$};
\node at (-1,18) {$\vdots$};
\node at (-1,30) {$\vdots$};
}

\draw[dotted] (14,20) -- (20,20);
\draw[dotted] (11,20) -- (0,20);
\draw[dotted] (20,0) -- (20,20);

\node at (44,-2) {$T$};
\node at (-2,44) {$T$};
\node[align=center] at (20,-4.2) {Figure 6: Shape form III prototype; $S=\uparrow\{(3,3),(9,7),(14,17^c)\}\cup J_{10,\infty}$};

\end{tikzpicture}
\end{center}
{\color{black}In Figure 6, $a$ is chosen if and only if there are at least 3 supporters and at most 3 opponents; or at least 9 supporters and at most 7 opponents; or at least 14 supporters and at least 17 voters in total who either support $a$ or are indifferent; or at least 10 supporters and only finitely many opponents. The third condition translates the corner $(14,17^c)$: its second coordinate places a lower bound on the number of voters who do not strictly oppose $a$. The last condition is supplied by $J_{10,\infty}$.
}


\begin{center}
\begin{tikzpicture}[xscale=0.35, yscale=0.28, line cap=round, line join=round]


\draw[<->] (45,0) -- (0,0) -- (0,45);
\draw (40,0) -- (0,40);

\fill[white] (40,40) -- (40,0) -- (0,40) -- cycle;
\draw (40,0) -- (0,40);

\fill[gray!20]
  (3,0) -- (3,2) -- (9,2) -- (9,4) -- (20,4)
  -- (20,12) -- (23,12) -- (23,14) -- (25,14)
  -- (25,15) -- (40,0) -- cycle;

\draw[<->] (45,0) -- (0,0) -- (0,45);
\draw (40,0) -- (0,40);

\draw[very thick]
  (3,0) -- (3,2) -- (9,2) -- (9,4) -- (20,4) -- (20,6);
\draw[very thick,dashed] (20,6) -- (20,12);
\draw[very thick]
  (20,12) -- (23,12) -- (23,14) -- (25,14)
  -- (25,15) -- (40,0);

\node[scale=1.05] at (3,2) {$\star$};
\node[scale=1.05] at (9,4) {$\star$};
\node[scale=1.05] at (20,6) {$\star$};
\node[scale=1.05] at (23,14) {$\star$};
\node[scale=1.05] at (25,15) {$\star$};

\draw[thick,fill=white] (20,12) circle[radius=0.32];
\node[anchor=west,scale=0.9] at (15,12) {$(\infty,12)$};


\node[anchor=east,scale=0.9] at (19.5,5.7) {$(\infty,6)$};

{\footnotesize
\node at (0,-1) {$0$};
\node at (1,-1) {$1$};
\node at (2,-1) {$2$};
\node at (3,-1) {$3$};
\node at (4,-1) {$4$};
\node at (9,-1) {$9$};
\node at (20,-1) {$\infty$};
\node at (23,-1) {$17^c$};
\node at (25,-1) {$15^c$};
\node at (40,-1) {$0^c$};
\node at (39,-1) {$1^c$};
\node at (38,-1) {$2^c$};
\node at (37,-1) {$3^c$};
\node at (36,-1) {$4^c$};
\node at (10,-1) {$\cdots$};
\node at (17.8,-1) {$\cdots$};
\node at (30,-1) {$\cdots$};
\node at (34,-1) {$\cdots$};

\node at (-1,0) {$0$};
\node at (-1,1) {$1$};
\node at (-1,2) {$2$};
\node at (-1,3) {$3$};
\node at (-1,4) {$4$};
\node at (-1,6) {$6$};
\node at (-1,12) {$12$};
\node at (-1,20) {$\infty$};
\node at (-1,40) {$0^c$};
\node at (-1,39) {$1^c$};
\node at (-1,38) {$2^c$};
\node at (-1,37) {$3^c$};
\node at (-1,36) {$4^c$};
\node at (-1,8.8) {$\vdots$};
\node at (-1,23) {$\vdots$};
\node at (-1,30) {$\vdots$};
}

\draw[dotted] (20,0) -- (20,20)-- (0,20);
\draw[dotted] (23,0) -- (23,14) -- (0,14);
\draw[dotted] (25,0) -- (25,15) -- (0,15);

\node at (44,-2) {$T$};
\node at (-2,44) {$T$};
\node at (20,-4.2) {Figure 7: Shape form IV prototype; $S=\uparrow \{(3,2), (9,4), (\infty,6), (17^c,14), (15^c,15)\}\cup K_{\infty,12}$};

\end{tikzpicture}
\end{center}
{\color{black}In Figure 7, $a$ wins with at least 3 supporters and at most 2 opponents; with at least 9 supporters and at most 4 opponents; or with infinitely many supporters and at most 6 opponents. It also wins if all but finitely many voters support $a$ and at most 12 oppose it; if all but at most 17 voters support $a$ and at most 14 oppose it; or if all but at most 15 voters support $a$. In every other case, $b$ is chosen. The condition involving 12 opponents comes from $K_{\infty,12}$, while the bounds of 17 and 15 count opponents and indifferent voters together.
}

\section{Conclusions}\label{conclusioni}
{\color{black}
We have extended the geometric approach to anonymous strategy-proof
binary social choice developed for finite electorates in Basile, Rao and
Bhaskara Rao \cite{mss} to a countably infinite society, taking coalitional
strategy-proofness --- rather than its individual counterpart --- as the
relevant non-manipulability requirement, and admitting indifference in
individual preferences.

\smallskip
\noindent
The extension required two separate steps, each summarized by a main
result. First, Theorem~5.3 shows that anonymous, coalition strategy-proof
binary social choice functions on the universal weak domain are in
one-to-one correspondence with the super order closed subsets of an
infinite triangular grid $G$ built from the linearly ordered set $T$ of cardinality
signatures. This is the exact infinite-society counterpart of the finite
representation via SOC subsets of $G_n$, and it rests on the permutation
lemma (Lemma~5.4) --- the genuinely infinitary fact that makes the
passage from signature comparisons to actual coalition comparisons
possible. Second, Corollary \ref{thm:zigzag} shows that every nonempty SOC subset of
$G$ falls into one of four shapes (I--IV), determined by a possibly
infinite antichain of genuine corners together with at most one of two
missing-boundary regions, $J_{k,\infty}$ and $K_{\infty,\ell}$, that have
no counterpart in the finite grid $G_n$. Example \ref{62} shows why these
extra regions are needed: it exhibits a SOC subset of $G$ that cannot be
generated by any antichain at all, Proposition \ref{use of parameters} further shows that every SOC subset of $G$ corresponds to a unique admissible
pair $(A,R)$, where $A$ is its antichain of minimal elements and $R$ is either empty or an essential
$J$- or $K$-region. Together with Theorem \ref{53}, this gives a one-to-one parametrization of all
anonymous coalition strategy-proof rules.

\smallskip
\noindent
We regard the failure of generation
by minimal elements as more than a technical curiosity. In case of anonymity, it is the
strategy-proof, binary-choice counterpart of a pattern long documented in
the sensu lato Arrovian  tradition: passing to an infinite society does not simply
add more of the same finite structures, it enlarges the admissible class
of collective choice rules in ways finite reasoning cannot anticipate ---
there, through non-dictatorial rules generated by free ultrafilters; 
here, through rules whose defining regions may require missing-boundary components in
addition to the upward closure of their minimal elements.
 A phenomenon  that has no finite-society correspondent.

\smallskip
\noindent
{\color{black}Coalitional strategy-proofness is not adopted here merely as standard terminology: on the grid $G$, a single voter's report can move a coalition's cardinality signature by at most one step \emph{within} the finite band $\{0,1,2,\ldots\}$ or within the cofinite band $\{\ldots,2^{c},1^{c},0^{c}\}$, but no individual deviation can cross between the finite, $\infty$, and cofinite bands. The comparisons across these bands involving the vertex $(\infty,\infty)$ and the missing-boundary corners are therefore invisible to individual manipulation. Coalitional strategy-proofness constrains these comparisons by also ruling out profitable deviations by infinite coalitions. In this precise sense, coalitional strategy-proofness is the notion under which the classification of Section \ref{sei} is valid as a complete characterization of the admissible rules.
}}

 \section{Appendix: proofs}\label{dimostrazioni}

 \subsection{ }\label{dim classificazione}
 \begin{proof}[Proof of the classification Theorem \ref{classificazione}]
If $\F=\varnothing$, then $\F=\G_{\co-0}$. Suppose now that $\F\neq\varnothing$.

Assume first that $\F$ contains a finite set. Let
\[
k_0=\min\{|E|:E\in\F,\ E\text{ finite}\}.
\]
We prove that $\F=\G_{k_0}$. If $E\in\F$, then by the definition of $k_0$ and by SOC, $|E|\geq k_0$. Conversely, let $E\subseteq\N$ with $|E|\geq k_0$. Take $F\in\F$ with $|F|=k_0$. There is a permutation $\pi$ of $\N$ such that $\pi(F)\subseteq E$. By anonymity, $\pi(F)\in\F$; by SOC, $E\in\F$. Hence $\F=\G_{k_0}$.

Assume now that the elements of $\F$ are all infinite. There are two cases.

First, suppose that $\F$ contains an infinite set $F$ which is not cofinite. We claim that $\F=\G_\infty$. Let $E$ be infinite. If $E$ is not cofinite, then $F$ and $E$ are both countably infinite with countably infinite complements; hence a permutation $\pi$ of $\N$ exists with $\pi(F)=E$. Therefore $E\in\F$. If $E$ is cofinite, take an infinite subset $E_0\subseteq E$ whose complement is infinite. By the previous argument $E_0\in\F$, and then SOC gives $E\in\F$. Thus every infinite set belongs to $\F$.

Second, suppose that every element of $\F$ is cofinite. Set
\[
K=\{|F^c|:F\in\F\}\subseteq \N_0.
\]
If $K$ has a maximum, say $k_0$, we prove that
\[
\F=\{E\subseteq\N: |E^c|\leq k_0\}=\G_{\co-(k_0+1)}.
\]
The inclusion $\F\subseteq\{E:|E^c|\leq k_0\}$ is immediate. Conversely, take $E$ with $|E^c|\leq k_0$ and take $F\in\F$ with $|F^c|=k_0$. There is a permutation $\pi$ of $\N$ such that $\pi(E^c)\subseteq F^c$. Equivalently, $F\subseteq \pi(E)$. By SOC, $\pi(E)\in\F$, and then by anonymity $E\in\F$.

If $K$ has no maximum, we prove that $\F=\G_{\co-\infty}$. Let $E$ be cofinite. Since $K$ has no maximum, there is $F\in\F$ such that $|F^c|>|E^c|$. By the argument just used, $E\in\F$. Thus every cofinite set belongs to $\F$.

The proof is complete.
\end{proof}

 \subsection{ }\label{dim 44}
 \begin{proof}[Proof of the canonical representation Theorem \ref{44}]
 This is a straightforward consequence of Corollary \ref{prima rappresentazione}. 
 
\noindent The latter gives that the elements of 
$\SCF_{AN}^s$  are 
$\phi_{\G_0}, \dots, \phi_{\G_k}, \dots, \phi_{\G_\infty}, \phi_{\G_{co-\infty}}, \dots,   \phi_{\G_{co-k}},\dots,  \phi_{\G_{co-0}}.$ 

Now corresponding to every case of the anonymous committees $\G$'s we provide the SOC subset $C$ of $(T, \le_T)$ such that
$$D(a, P)\in \G\Leftrightarrow \rchi_s(P)\in C.$$ We list next the correspondences.

- For $\phi_{\G_0}$ the SOC $C$ is $T$;

- for $\phi_{\G_k}$ the SOC $C$ is the interval $[k, 0^c]$ of $T$;

- for $\phi_{\G_\infty}$ the SOC $C$ is the interval $[\infty, 0^c]$ of $T$;

- for $\phi_{\G_{co-\infty}}$ the SOC $C$ is the interval $]\infty, 0^c]$ of $T$;

- for $\phi_{\G_{co-k}}$ the SOC $C$ is the interval $[(k-1)^c, 0^c]$ of $T$; \quad and finally

- for $\phi_{\G_{co-0}}$ the SOC $C$ is the empty set.

\noindent As we exhaust all of $\SCF_{AN}^s$, we have to remark that with the corresponding $C$'s we also exhaust all SOC subsets of $T$. Indeed,
 {\color{black} excluding $T$ and the empty set, the remaining SOC sets $C$ are such that 
$\O\subset C\subset T$, and of course there are only two possibilities: either $C$ has the minimum ($>0$) or $C$ does not. In the first case one can see that $C$ is one of the intervals $[k, 0^c]$ $[\infty, 0^c]$ $[k^c, 0^c]$. Whereas in the second case one can see that $C=]\infty, 0^c]$.}
\end{proof}

\subsection{ }\label{dim permutation lemma}
\begin{proof}[Proof of the permutation Lemma \ref{permutation lemma}]
Given the sets $A_1,B_1,A_2,B_2\subseteq\N$ with
\[
A_1\cap B_1=\varnothing,
\quad
A_2\cap B_2=\varnothing, \quad
\siga(A_1)\leq_T\siga(A_2),
\mbox{ and }
\siga(B_1)\geq_T\siga(B_2),
\]
we prove that there is a permutation $\pi$ of $\N$ such that
\[
\pi(A_1)\subseteq A_2,
\qquad
B_2\subseteq \pi(B_1).
\]
First we observe that, as it can be promptly verified,
$$ T\ni \alpha\le_T \sigma(F), \, F\subseteq \N \Rightarrow \, \exists E\subseteq F\mbox{ such that } \sigma(E)=\alpha.$$
Due to this,  we can prove the assertion under the stronger assumption that
$$(\star)\qquad \sigma(A_1)= \sigma(A_2), \mbox{ and }\sigma(B_2)= \sigma(B_1),$$ without loss of generality.

\noindent
Indeed, given our assumptions we can take 

-- $D_2\subseteq A_2$ with $\sigma(D_2)=\sigma (A_1)$, and

-- $D_1\subseteq B_1$ with $\sigma(D_1)=\sigma (B_2)$.

\noindent Then, if  the lemma under the signature-equality assumption has been proved, we can apply it to the pairs $A_1, D_1$ and $D_2, B_2$, obtaining a permutation $\pi$ of $\N$ with 
$\pi(A_1)\subseteq D_2$, and $ B_2\subseteq \pi(D_1)$. This is clearly enough.

\bigskip\noindent
So, let us assume  $(\star)$. We consider the following four cases.

\medskip
(case 1): $\sigma(A_1)\in]\infty, 0^c]$ and (necessarily) $\sigma(B_1)$ finite; say
we can set  $\sigma(A_i)=k^c, \sigma(B_i)=m, |A_i^c|=k$, for some $k\ge m\in \N_0$. 

\medskip
(case 2): $\sigma(A_1)=\infty$ and  $\sigma(B_1)$ finite; say
we can set  $\sigma(A_i)=\infty, \sigma(B_i)=m$ for some $m\in \N_0$, with the sets $ A_i, A_i^c$ both infinite. 

\medskip
(case 3): $\sigma(A_1)=\infty=\sigma(B_1)$;\footnote{When $\sigma(A_1)=\infty$ the possibility that $\sigma(B_1)$ is some $k^c$ does not arise.}

\medskip
(case 4): both $\sigma(A_1)$ and $\sigma(B_1)$ are finite; say
we can set  $\sigma(A_i)=k, \sigma(B_i)=m, |A_i^c|=\infty$, for some $k,m\in \N_0$. 

\medskip\noindent
In principle, one should discuss even two other cases:

-- $\sigma(A_1)$ finite,  $\sigma(B_1)=\infty$; and 

-- $\sigma(A_1)$ finite,  $\sigma(B_1)\in]\infty, 0^c]$. 

\noindent
However, by symmetry, they are \lq\lq identical\rq\rq\, respectively to (case 2) and (case 1). Indeed it is enough to replace 
in order the pairs $A_1, B_1$, $A_2, B_2$ with the pairs $B_2, A_2$, $B_1, A_1$.\footnote{\color{black} Once the permutation $\pi$ is consequently obtained, this has to be replaced with its inverse.} No other cases arise.

\noindent
Finding the desired permutation in the three cases 1, 2, 4 is straightforward, 
and one gets $\pi(A_1)=A_2$, and $ B_2=\pi(B_1)$.

\noindent
In (case 3), the sets $A_i, B_i$ are infinite whereas we do not know the cardinalities of the sets $A_i^c\setminus B_i$. We can assume without loss of generality that $|A_1^c\setminus B_1|<|A_2^c\setminus B_2|$. Then the set $A_1^c\setminus B_1$ is finite. We can partition $A_2^c\setminus B_2$ in two  subsets $I_2$ and $J_2$ with $I_2$ having the same cardinality as $A_1^c\setminus B_1$. Hence we can get the permutation $\pi$ by gluing a bijection of $A_1$ onto $A_2$, a bijection of $A_1^c\setminus B_1$ onto $I_2$, and a bijection of $B_1$ onto $B_2\cup J_2$.
\footnote{\color{black}Note that even with signature-equality assumption it is not possible to get $\pi(A_1)=A_2$, and $ B_2=\pi(B_1)$.}
\end{proof}

\subsection{ }\label{dim prop:infinite-antichains}
\begin{proof}[Proof of Proposition \ref{prop:infinite-antichains} on infinite antichains] 

We show that if we remove, possibly, the point $(\infty, \infty)$ from an infinite antichain of 
$G$, the residual antichain, which we denote by 
$A$, turns out to be a set like
 $\{(k_1, m_1), (k_2, m_2), \dots \}$ where the $k$'s and the $m$'s form two strictly increasing sequences in $\N_0$. 

Observe that  {\it every nonempty section of   an antichain of $G$ is a singleton.} Now we aim to prove that

(CLAIM): $A\subseteq \N_0\times\N_0$.

First we observe that $A\subseteq [0, \infty]\times [0, \infty]$. For, {\it if $(\alpha_0, \beta_0)\in A$ and $\alpha_0=k_0^c$, we get that $A$ is finite}.
Indeed, the construction of the poset $G$ tells us that $\beta_0\in [0,k_0]$. Moreover, 

$\big[(\alpha_0, \beta_0)\neq(\alpha, \beta)\in A\big]\Rightarrow \big[\beta_0\neq\beta,$ and $\beta\in [0, k_0]\big]$, what can be written as 
$A\setminus \{(\alpha_0, \beta_0)\}\subseteq\bigcup_{\beta\in[0, k_0], \beta\neq\beta_0} A^\beta$, so the finiteness of $A$.
{\it Similarly if for the second coordinate we have $\infty<_T\beta_0.$}

The sections $A_\infty, A^\infty$ are at most singletons, hence infinitely many points of $A$ belong $\N_0\times\N_0$. If one of the two sections is nonempty it will contain a point which is comparable with elements of $A\cap\N_0\times\N_0$. So the claim is proved.

Given that the (CLAIM) is true to obtain the assertion of the proposition is elementary.\end{proof}

\subsection{ }\label{dim lem:covering}
\begin{proof}[\color{black}Proof of  the covering lemma \ref{lem:covering}]
Let \((\bar\alpha,\bar \beta)\in S\).  For the horizontal section $S^{\bar\beta}$ 
we have the following three cases to analyze.
\begin{itemize}
\item[ ] Case 1: The set $S^{\bar\beta}$ has minimum $\alpha_0$ and $S_{\alpha_0}$ has maximum $\beta_0$.
\item[ ] Case 2: The set $S^{\bar\beta}$ has minimum $\alpha_0$ but $S_{\alpha_0}$ does not admit maximum.
\item[] Case 3: The set $S^{\bar\beta}$ does not admit minimum.
\end{itemize}
In each of the cases we exhibit a corner that absorbs $(\bar\alpha, \bar\beta)$ as asserted. Precisely, 

\begin{itemize}
\item [] Case 1: we show that a corner of S, of type 1, that absorbs  $(\bar\alpha, \bar\beta)$ is $(\alpha_0, \beta_0)$.
\item [] Case 2: we show that a corner of S, of type 2, that absorbs  $(\bar\alpha, \bar\beta)$ is $(\alpha_0, \infty)$.
\item []  Case 3: we show that $(\bar\alpha, \bar\beta)$ is absorbed by a corner of S of type 3.
\end{itemize}
So, let us discuss the first case, where by definitions clearly \((\alpha_0,\bar\beta),\,(\alpha_0,\beta_0)\in S\), and \(\beta_0\ge_T \bar\beta\).

\noindent  Moreover we show, repeatedly using super order closedness of $S$, that $(\alpha_0,\beta_0)$  is a minimal point of $S$:
$$(\alpha_0,\beta_0)\ge_G (\gamma, \eta)\in S\Rightarrow (\alpha_0,\beta_0)=(\gamma, \eta).$$
Observe that 
$(\gamma,\beta_0)\ge_G (\gamma, \eta)\in S\Rightarrow (\gamma,\beta_0)\in S$. On the other hand  from $\beta_0\ge_T \bar\beta\) we obtain that $\gamma\in S^{\bar\beta}$. This means that $\gamma=\alpha_0$.

\noindent
Still it is possible that  \(\eta>_T\beta_0\). However, observe that we have obtained 
 $(\alpha_0,\eta)\in S$. Then \(\eta\in S_{\alpha_0}$, hence $\eta=\beta_0$.

\noindent Therefore \((\alpha_0,\beta_0)\) is a corner of type 1. Since \(\alpha_0\le_T \bar\alpha\) and \(\beta_0\ge_T \bar\beta\), we have
$
(\alpha_0,\beta_0)\le_G(\bar\alpha,\bar\beta),
$
and therefore \((\bar\alpha,\bar\beta)\in I_{\alpha_0,\beta_0}\).

\bigskip
\noindent
We move to consider the second case. As we have observed in Remark \ref{due} necessarily $\sup S_{\alpha_0}=\infty$, hence by definition $(\alpha_0,\infty)\notin S$,  $(\alpha_0,\bar\beta)\in S$,  \(\bar\beta<_T\infty\), $\alpha_0\le_T \bar\alpha$. The only possibility is that  \(\alpha_0\in\Nb\), otherwise the section $S_{\alpha_0}$ would be  finite, against the fact that we have assumed it has no maximum.

\noindent
Since S is SOC, we get $S_{\alpha_0}=\mathbb N_0$, i.e.
$$
(\alpha_0,m)\in S
\,\,
\hbox{for every }m\in\Nzero,\,\,
\mbox{ and consequently }
J_{\alpha_0,\infty}\subseteq S.
$$
Hence \(\alpha_0\in A_2(S)\). This determines the existence of the corner $(k_2(S), \infty)$of type 2. Since 
\(\alpha_0\le_T \bar\alpha\) and \(\bar\beta<_T\infty\), we have \((\bar\alpha,\bar\beta)\in J_{\alpha_0,\infty}\subseteq J_{k_2(S),\infty}\).

\bigskip
\noindent In case 3, as we have observed in Remark \ref{due} necessarily $\inf S^{\bar\beta}=\infty$, hence by definition $(\infty,\bar\beta)\notin S$, \(\bar\alpha >_T\infty\).
 As a consequence \(\bar\beta<_T\infty\). 
 
 \noindent Clearly \((\bar\alpha,\bar\beta)\in K_{\infty,\bar\beta}\). We claim that  

\[
K_{\infty,\bar\beta}\subseteq S,
\]

 \noindent namely that
 \[
(\gamma,\beta')\in S
\qquad
\hbox{for every }\gamma>_T\infty,
\ \beta'\le_T \bar\beta, \mbox{ with } (\gamma,\beta')\in G.
\]
Since $\gamma$ is not a lower bound for the section $S^{\bar\beta}$, there exists \(\alpha\in S^{\bar\beta}\) with \(\alpha<_T\gamma\). Obviously:
$$S\ni(\alpha,\bar\beta)\le_G(\gamma,\beta')
$$
hence super order closedness  of $S$ gives  \((\gamma,\beta')\in S\).
We have so obtained that \(\bar\beta\in A_3(S)\). This determines the existence of the corner $(\infty,\ell_3(S))$of type 3 absorbing \((\bar\alpha,\bar\beta)$.
\end{proof}


\subsection{ }\label{dim thm:zigzag}
\begin{proof}[Proof of  Corollary \ref{thm:zigzag} about the shape of the SOC subsets of $G$]
By Theorem~\ref{prop:exact} and Remark \ref{interpretazione1}, every nonempty SOC subset \(S\) is generated exactly by its genuine corners together with at most one  \(J\)-region and at most one  \(K\)-region.

If \(\tau_1(S)\) is infinite, Proposition~\ref{prop:infinite-antichains} gives the two possible compositions of \(\tau_1(S)\). Moreover \(\up \tau_1(S)\) already contains \(J_{\infty,\infty}\) and \(K_{\infty,\infty}\). A region \(J_{k,\infty}\) with finite \(k\) cannot coexist with the infinite sequence of finite-finite type 1 corners, because it would put, above all sufficiently far corners of the sequence, points of \(S\) on the same vertical line, contradicting the type 1 condition. Thus all possible missing-boundary contributions are redundant, and \(S=\up \tau_1(S)\). This proves case I.

Assume now that \(\tau_1(S)\) is finite. Since any point of the north-west triangle \(E\) is comparable with any point of the south-east triangle \(F\), the antichain \(\tau_1(S)\) cannot meet both \(E\) and \(F\). Hence exactly one of the following alternatives holds: \(\tau_1(S)\subseteq\Nb^2\), or \(\tau_1(S)\cap E\ne\varnothing\), or \(\tau_1(S)\cap F\ne\varnothing\).

If \(\tau_1(S)\subseteq\Nb^2\), Theorem~\ref{prop:exact} gives \(\up \tau_1(S)\), possibly with a \(J\)-term, possibly with a \(K\)-term. If both are present, then
\[
K_{\infty,\ell}\subseteq J_{k,\infty}
\]
for every \(k,\ell\in\Nb\), so the \(K\)-term may be omitted. This gives case II.

If \(\tau_1(S)\cap E\ne\varnothing\), take \((p,q^{\cmark})\in \tau_1(S)\cap E\). Then, for every \(\ell\in\Nb\),
\[
K_{\infty,\ell}\subseteq I_{p,q^{\cmark}}\subseteq\up \tau_1(S).
\]
Thus a possible \(K\)-term is redundant. A \(J\)-term may remain, giving case III.

If \(\tau_1(S)\cap F\ne\varnothing\), take \((p^{\cmark},q)\in \tau_1(S)\cap F\). A \(J\)-term cannot occur. Indeed, if \(J_{k,\infty}\subseteq S\), then, since \(k\le_T\infty<_T(p+1)^{\cmark}<_T p^{\cmark}\), the feasible point \(((p+1)^{\cmark},q)\) belongs to \(J_{k,\infty}\subseteq S\). This point has the same ordinate as \((p^{\cmark},q)\) and lies strictly to its left, contradicting the horizontal minimality condition in the definition of a type 1 corner. Therefore only a possible \(K\)-term may remain, giving case IV.
\end{proof}

{\color{black}
\subsection{ }\label{dim use of parameters}
\begin{proof}[Proof of Proposition \ref{use of parameters}]Every member of \(\mathcal R\) is SOC, so \(S_{A,R}\) is SOC for every admissible pair. Every nonempty member of \(\mathcal R\) has no minimal elements. Furthermore, since \(R\) is SOC and \(A\cap R=\varnothing\), no point of \(R\) can lie below a point of \(A\). Together with the antichain property of \(A\), these observations show that the minimal elements of \(S_{A,R}\) are exactly the points of \(A\).

To prove surjectivity, let \(S\) be a nonempty SOC subset of \(G\) and set \(A=\tau_1(S)\). By Theorem 6.8 and Remark 6.9, \(S\) can be written as
$$
S=\big(\uparrow A\big)\cup R
$$
with \(R\in\mathcal R\): if both exceptional terms occur, the \(K\)-term is omitted because it is contained in the \(J\)-term. If the remaining exceptional region is contained in \(\uparrow A\), take \(R=\varnothing\). Otherwise, retain it. In the latter case, \(A\cap R=\varnothing\), because \(R\subseteq S\) has no minimal elements whereas every point of \(A\) is minimal in \(S\). Thus \((A,R)\) is admissible. The empty SOC set is obtained from \((\varnothing,\varnothing)\).

For injectivity, suppose that

$$
\big(\uparrow A\big)\cup R=\big(\uparrow A'\big)\cup R'
$$

for two admissible pairs. Since the minimal elements of the common set are both \(A\) and \(A'\), we have \(A=A'\). Write \(U=\uparrow A\).

If exactly one of \(R,R'\) were empty, the other would be contained in \(U\), contrary to admissibility. It remains to exclude the possibility that \(R\) and \(R'\) are distinct nonempty regions.

Suppose first that at least one is a \(J\)-region. Relabelling the two regions if necessary, we may take

$$
R=J_{k,\infty},
$$

where \(R'\) is either a \(K\)-region or \(J_{k',\infty}\) with \(k<k'\). For every \(m\in\mathbb N_0\),

$$
(k,m)\in R\setminus R',
$$

so equality of the two unions implies \((k,m)\in U\). Since

$$
\uparrow\{(k,m):m\in\mathbb N_0\}=J_{k,\infty},
$$

we obtain \(R\subseteq U\), contradicting admissibility.

Finally, suppose that

$$
R=K_{\infty,\ell},
\qquad
R'=K_{\infty,\ell'},
\qquad
\ell<\ell'.
$$

Here \(\ell\) is finite. Put \(q=\ell+1\), so \(q\leq\ell'\). For every integer \(p\geq q\),

$$
(p^c,q)\in R'\setminus R,
$$

and hence \((p^c,q)\in U\). Consequently,

$$
K_{\infty,q}
=
\uparrow\{(p^c,q):p\in\mathbb N_0,\ p\geq q\}
\subseteq U.
$$

But \(K_{\infty,\ell}\subseteq K_{\infty,q}\), again contradicting admissibility. Therefore \(R=R'\), completing the proof. 
\end{proof}
}

\end{document}